\newcommand{\be}{\begin{eqnarray} \begin{aligned}}
\newcommand{\ee}{\end{aligned} \end{eqnarray} }
\newcommand{\benn}{\begin{eqnarray*} \begin{aligned}}
\newcommand{\eenn}{\end{aligned} \end{eqnarray*} }
\newcommand{\bc}{\begin{center}}
\newcommand{\ec}{\end{center}}
\newcommand{\id}{\mathbb{I}}
\newcommand{\tr}{\mathop{\mathrm{Tr}}\nolimits}
\newcommand{\norm}[1]{\left\| #1\right \|}
\newtheorem{theorem}{Theorem}[section]
\newtheorem{lemma}[theorem]{Lemma}
\newtheorem{definition}[theorem]{Definition}
\newcommand{\hil}{\mathcal{H}}
\def\id{\mathbb{I}}
\def\01{\{0,1\}}
\newcommand{\eps}{\varepsilon}
\newcommand{\ket}[1]{|#1\rangle}
\newcommand{\outp}[2]{|#1\rangle\langle#2|}
\newcommand{\mH}{\mathcal H}
\newcommand{\mX}{\mathcal X}
\newcommand{\mA}{\mathcal A}
\newcommand{\mB}{\mathcal B}
\newcommand{\mR}{\mathcal R}
\newcommand{\mE}{\mathcal E}
\newcommand{\bA}{\mathbf A}
\newcommand{\bF}{\mathbf F}
\newcommand{\bG}{\mathbf G}
\newcommand{\bH}{\mathbf H}
\newcommand{\bP}{\mathbf P}
\newcommand{\bQ}{\mathbf Q}
\newcommand{\bS}{\mathbf S}
\newcommand{\cancel}[1]{}
\newcommand{\PlayerA}{{\textsf{A}}}
\newcommand{\PlayerB}{{\textsf{B}}}
\newcommand{\regQ}{\mathcal{Q}}
\newcommand{\regA}{\mathcal{A}}
\newcommand{\regM}{\mathcal{M}}
\newcommand{\regK}{\mathcal{K}}
\newcommand{\regE}{\mathcal{E}}
\newcommand{\regX}{\mathcal{X}}
\newcommand{\regB}{\mathcal{B}}
\newcommand{\regC}{\mathcal{C}}
\newcommand{\regR}{\mathcal{R}}
\newcommand{\regY}{\mathcal{Y}}
\newcommand{\setS}{\mathbb{S}}
\newcommand{\setT}{\mathbb{T}}
\newcommand{\setX}{\mathbb{X}}
\newcommand{\setY}{\mathbb{Y}}
\newcommand{\setZ}{\mathbb{Z}}
\newcommand{\setA}{\mathbb{A}}
\DeclareMathOperator{\inn}{in}
\DeclareMathOperator{\outt}{out}
\newcommand{\ROT}[3]{{#2 \choose #1}{\textsf{-ROT}^{#3}}}
\newcommand{\ROTT}{{\textsf{ROT}}}
\newcommand{\TOR}[3]{{#2 \choose #1}{\textsf{-TOR}^{#3}}}
\newcommand{\TORR}{{\textsf{TOR}}}
\newcommand{\BC}{{\textsf{BC}}}
\newcommand{\OT}[3]{{#2 \choose #1}{\textsf{-OT}^{#3}}}
\newcommand{\OTT}{{\textsf{OT}}}
\newcommand{\OTfromROT}{\textsf{OTfromROT}}
\newcommand{\comm}{{\textsf{Comm}}}
\newcommand{\qComm}{{\textsf{Q-Comm}}}
\newcommand{\bqsOT}{{\textsf{BQS-OT}}}
\newcommand{\bqsTO}{{\textsf{BQS-TO}}}
\newcommand{\OTtoBC}{{\textsf{OTtoBC}}}
\newcounter{protoCount}
\newcounter{protoList}
\newsavebox{\tmpbox}
\newlength{\protobox}
\newenvironment{protocol}[2]{
\bigskip
\addtocounter{protoCount}{1}
\noindent \begin{lrbox}{\tmpbox}
\setlength{\protobox}{\textwidth}
\addtolength{\protobox}{-0.5cm}
\begin{minipage}[c]{\protobox}
\begin{bfseries}Protocol \theprotoCount: #1\end{bfseries}
\ifthenelse{\equal{#2}{\empty}}{}{\\Prerequisite: #2}
\begin{list}{\begin{bfseries}\arabic{protoList}:\end{bfseries}}
{\usecounter{protoList}}
}{
\end{list}
\end{minipage}\end{lrbox}
\fbox{\usebox{\tmpbox}}
\bigskip
}
\begin{document}

\title{{\sf Composable Security in the Bounded-Quantum-Storage Model}}

\author{Stephanie Wehner
\thanks{Supported 
by EU fifth framework project QAP IST 015848 and the NWO vici project 2004-2009.}
\\CWI, Amsterdam\\wehner@cwi.nl
\and J{\"u}rg Wullschleger\\University of Bristol\\juerg@wulli.com}
\date{\today}
\maketitle

\begin{abstract}
We present a simplified framework for proving sequential composability in the quantum setting. 
In particular, we give a new, simulation-based, definition for security in the bounded-quantum-storage model,
and show that this definition allows for sequential composition of protocols. 
Damg{\aa}rd \emph{et al.} (FOCS '05, CRYPTO '07) showed how to securely implement bit commitment and
oblivious transfer in the bounded-quantum-storage model, where the
adversary is only allowed to store a limited number of qubits.
However, their security definitions did only apply to the standalone setting, and
it was not clear if their protocols could be composed.
Indeed, we first give a simple attack that shows that these protocols are \emph{not} composable
without a small refinement of the model.
Finally, we prove the security of their randomized oblivious transfer protocol in our refined model. 
Secure implementations of oblivious transfer and bit commitment then follow easily by a
(classical) reduction to randomized oblivious transfer.
\end{abstract}

\section{Introduction}

Secure two-party computation~\cite{Yao82} allows two mutually distrustful players to jointly compute the value of
a function without revealing more information about their inputs than can be inferred from the function value itself.
In this context, the primitives known as bit commitment (BC) \cite{blum:coin}
and oblivious transfer (OT) \cite{Wiesner70,Rabin81,EvGoLe85} are of
particular importance: \emph{any} two-party computation can be implemented, provided these two primitives are available \cite{GolVai87,Kilian88,crepeau:committedOT}.

In bit commitment, 
the committer (Alice) secretly chooses a bit $b$, and commits herself to $b$ by exchanging messages with the verifier (Bob).
From the commitment alone, Bob should not be able to gain any information about $b$. Yet, when Alice later reveals
$b$ and opens the commitment by exchanging messages with Bob, he can verify whether Alice is truthful and
had indeed committed herself to $b$. 
In oblivious transfer, the sender (Alice) chooses two bits $x_0$ and $x_1$, 
the receiver (Bob) chooses a bit $c$. The protocol of oblivious transfer allows Bob to retrieve $x_c$
in such a way that Alice cannot gain any information about $c$. At the same time, Alice can be ensured 
that Bob only retrieves $x_c$ and no information about the other input bit $x_{1-c}$.

Unfortunately, BC and OT are impossible to implement securely without any additional assumptions, 
even in the quantum model~\cite{Mayers97,LoChau97}. This result holds even in the presence
of the so-called superselection rules~\cite{kitaev:super}.
Exact tradeoffs on how well we can implement BC in the quantum world can be found in~\cite{spekkens:tradeoffBc}.
To circumvent this problem (in both, the classical and the quantum case), we thus need to assume that the adversary is limited.
In the classical case, one such limiting assumption
is that the adversary is \emph{computationally bounded}, i.e., he is restricted to a polynomial time computations 
(see e.g. \cite{Naor91,EvGoLe85}).
In the quantum model, it is also possible to securely implement both protocols provided that an adversary cannot measure more than a fixed number of qubits
simultaneously~\cite{salvail:physical}. Very weak forms of string commitments can also be obtained~\cite{wehner06d}.

\paragraph{The Bounded-Quantum-Storage Model.}

Of particular interest to us is the bounded-storage model. Here, the adversary is bounded in \emph{space}
instead of time, i.e., she is only allowed to use a certain amount of storage space. Both OT and BC can be implemented
in this model \cite{cachin:boundedOT}.
Yet, the security of a \emph{classical} bounded-storage 
model~\cite{Maurer92b,cachin:boundedOT} is somewhat unsatisfactory:
First, a dishonest player needs only quadratically more memory than the honest one.
Second, as classical memory is very cheap, most of these protocols require a huge amount of
communication in order to achieve reasonable bounds on the adversaries memory.
In the quantum case, on the other hand, it is \emph{very} difficult to store states even for a very short period of time. This
leads to the protocol presented in \cite{BBCS92,Crepea94}, which show how to implement BC and OT if the adversary is
not able to store \emph{any} qubits at all.
In~\cite{serge:bounded,serge:new}, these ideas have been generalized in a very nice way
to the \emph{bounded-quantum-storage model}, where the adversary is computationally unbounded and 
allowed to have an unlimited amount of \emph{classical} memory. However, he is only allowed 
a limited amount of \emph{quantum} memory. 
The advantages over the classical bounded-storage model are two fold: 
First, given current day technology it is indeed very hard to store quantum states. Secondly, here the honest
player does not require any quantum storage at all, making the protocol implementable using present day technology.

\paragraph{Security Definitions and Composability.}

Cryptographic protocols (especially protocols that implement very basic functionalities such as BC or OT) are almost never executed on their own. They are merely used
as building blocks for larger, more complicated applications. However, it is not clear that
the composition of secure protocols will remain secure. 
Formal security definitions for secure function evaluation have first been proposed in \cite{MicRog91} and \cite{Beaver91}. These definitions use the \emph{simulation paradigm} invented in \cite{GoMiRa89} to define zero-knowledge proofs of knowledge.
In \cite{Canetti00b} it has been shown formally that these definitions imply
that protocols can be composed \emph{sequentially}.
Sequential composition implies that protocols can be composed in an arbitrary way, as long as
at any point in time exactly one protocol is running. All other protocols have to wait
until that protocol stops.
A stronger security definition called \emph{universal composability} has been introduced in
\cite{Canetti00,PfiWai00,BaPfWa03}. It guarantees that protocols can
be securely composed in an arbitrary way (also concurrently) in any environment.

Simulation-based security requires that for any adversary attacking the real
protocol there exists a simulator in the ideal setting, i.e. where the players only have black-box access to an ideal
functionality, such that the environment cannot distinguish between the real
and the ideal setting. To make the protocol sequentially composable, we have to allow the adversary to receive some \emph{auxiliary input} from 
the environment, which could contain information from a previous run of the protocol, the larger application that the protocol is embedded in, 
or any other information that the environment might pass to the adversary in an attempt to distinguish between the real from the ideal setting.
In the quantum case, this auxiliary input is an arbitrary quantum state, unknown to the adversary or the simulator.
This presents us with two additional difficulties we do not encounter in the 
classical setting: First, the simulator cannot determine what this input state is without disturbing the state, which could be 
detected by the environment. Second, the input state may be entangled with the environment. 
Based on earlier work in \cite{vGraaf98}, a simulation-based framework for secure quantum multi-party computation has been presented in \cite{Smith01}. That framework offers sequential composability, however
no composability theorem was presented there. Universal composability in the quantum world has been introduced in~\cite{benor:compose},  and independently in \cite{unruh:compose}.
Their framework is very powerful. But, due to their complexity, hard to apply.

In \cite{Unruh02}, it has been shown that classical protocols which have been proven to be universally composable using their \emph{classical} definitions, are secure against \emph{quantum} adversaries. This is result is very useful, as it allows us to use many classical protocols also in the quantum setting.
Great care must be taken in the definition of security in the quantum setting:
For example,
the standard security definition for QKD based on accessible information does
not imply composability~\cite{KRBM05}.

\subsection{Contribution}

In \cite{serge:new}, protocols for OT and BC have been presented and shown to be secure against bounded quantum adversaries. However, the proofs only guarantee security in a standalone setting. Indeed, a very simple attack shows that they are not composable.
The main contribution of this paper is to give a formal framework for security in the bounded-quantum-storage model, and to show that modified versions of these protocols are \emph{sequentially composable}. Hence, they can be used as building blocks in other protocols.

\paragraph{Proofs in \cite{serge:bounded,serge:new} do not imply Composability.}

When considering composable security, we need to allow the adversary to receive some auxiliary quantum input.
This has not been considered in the security definitions used in
\cite{serge:bounded,serge:new}. When we allow this, we are faced with two major
problems:
First, in the security proof of~\cite{serge:new}, the receivers choice bit
can only be extracted by the simulator if the distribution of the senders random string 
given the receivers classical knowledge is known, which is not the case if the adversary has auxiliary input.
Second, the \emph{memory bound} is only enforced at \emph{one} specific step in their protocol,
while during the rest of the protocol, the adversary is
allowed unlimited memory. The following very simple EPR-attack shows how the protocol can then be broken:
We let the adversary receive an arbitrary number of halves of EPR-pairs from the environment 
as his auxiliary input, and run the protocol as before. Then, just before the memory bound is applied, he \emph{teleports} his whole
quantum memory to the environment. The classical communication needed to teleport can be part of the adversaries classical storage
that he later outputs. Thus, the adversary can artificially increase his own storage by borrowing some quantum memory from the environment.

One possibility to overcome the second problem is to limit the memory of the environment. 
Yet, this solutions seems very unsatisfactory: While we may be willing to accept that, say, a 
smart-card cannot store more than 100 qubits, this is much less clear for the environment. How could we place any limitations 
on the environment at all?
In our framework, we thus always allow the environment to have an arbitrary amount of quantum memory,
but limit the adversaries memory.

\paragraph{Composable Security in the Bounded-Quantum-Storage Model.}

We start by presenting a formal model for secure two-party computation in the bounded-quantum-storage model. 
Our model is quite similar to the model presented in \cite{Smith01}, and provides \emph{offline-security}.
Then, we show that our model implies that secure protocols are sequentially composable.

Second, we slightly modify the model from \cite{serge:new}
and prove the security of randmomized OT in our refined model, which implies that the protocol is composable.
In particular, we introduce a second memory bound into the protocol, which limits the amount of quantum 
auxiliary input the adversary may receive. We show that the simulator can extract the choice bit, even
if the auxiliary quantum input remains completely unknown to him, and that the protocol is secure even if the quantum memory
of the environment is unbounded. It turns out that the protocol only remains secure for a smaller memory bound in our model.

Third, we give well-known \emph{classical} reductions of BC and OT to
randomized OT in the appendix, and prove that they are secure in our model.
Using the idea from \cite{Beaver95}, this also implies that the two players can \emph{precompute} $\ROTT$,
and, at a later point in time, they can use it to implement either an OT or a BC,
for which they only need classical communication.

Since the proof presented in \cite{Unruh02} carries over to our model,
secure function evaluation in the bounded-quantum-storage model can be achieved by simply using
the (classical) universal composable protocols presented in \cite{Estren04}, which are based on \cite{crepeau:committedOT}. Note that because our implementation
of OT is physical, the results presented in \cite{CLOS02} cannot be applied.

\paragraph{Outline}
In Section~\ref{prelim}, we introduce the basic tools that we need later. In Section~\ref{qframework}, we define a framework
that provides offline security in the bounded-quantum-storage model, which implies that protocols
can be composed sequentially.
In Section~\ref{OTsecurity}, we then prove the security of the randomized oblivious transfer
protocol from \cite{serge:new} in our refined model.
In the appendix, we show that secure implementations of oblivious transfer and bit commitment follow by a
(classical) reduction to randomized oblivious transfer.

\section{Preliminaries}\label{prelim}

\subsection{Notation}
We assume general familiarity with the quantum model~\cite{hayashi}. Throughout this paper,
we use the term \emph{computational basis} to refer to the basis given by $\{\ket{0},\ket{1}\}$.
We write $+$ for the computational basis, and let $\ket{0}_+ = \ket{0}$ and $\ket{1}_+ = \ket{1}$.
The \emph{Hadamard basis} is denoted by $\times$, and given by $\{\ket{0}_\times,\ket{1}_\times\}$, 
where $\ket{0}_\times = (\ket{0} + \ket{1})/\sqrt{2}$
and $\ket{1}_\times = (\ket{0}-\ket{1})/\sqrt{2}$. 
For a string $x \in \01^n$ encoded in bases $b \in \{+,\times\}^n$, we write $\ket{x}_b = \ket{x_{1}}_{b_1},\ldots,
\ket{x_{n}}_{b_n}$.
We also use 0 to denote $+$, and 1 to denote $\times$.
Finally, we use $x_{|c}$ to denote the sub-string of $x$ consisting of all $x_i$ where $b_i = c$.

We use the font $\mA$ to label a quantum register, corresponding
to a Hilbert space $\mA$. A \emph{quantum channel} from $\mA$ to $\mB$ 
is a completely positive trace preserving (CPTP) map $\Lambda: \mA \rightarrow \mB$.
We also call a map from $\mA$ to itself a \emph{quantum operation}.
Any quantum operation on the register $\mA$ can be phrased as a unitary operation
on $\mA$ and an additional ancilla register $\mA'$, where we trace out $\mA'$ to obtain the actions of the
quantum operation on register $\mA$~\cite{hayashi}. We use $\setS(\mA)$ to refer to the set of all quantum states in $\mA$,
and $\setT(\mA)$ to refer to the set of all Hermitian matrices in $\mA$.
We use ${\bf U}$ to refer to a quantum operation,
upper case letters $X$ to refer to classical random variables,
the font $\setS$ for a set,
and the font $\PlayerA$ to refer to a player in the protocol.

\subsection{Distance Measures}\label{quantumPrelim}

Our ability two distinguish to
quantum states is determined by their \emph{trace distance}. The trace distance between two states $\rho, \rho' \in \setS(\hil)$
is defined as
$
D(\rho,\rho') := \frac12 \tr  | \rho - \rho' |
$,
where  $|A| = \sqrt{A^\dagger A}$.
We also write $\rho \equiv_\eps \rho'$, if $D(\rho,\rho') \leq \eps$.
For all practical purposes, $\rho \equiv_\eps \rho'$ means that the state $\rho'$ behaves
like the state $\rho$, except with probability $\eps$~\cite{RenKoe05}.
For any quantum channel $\Lambda$, we have
$D(\Lambda(\rho),\Lambda(\rho')) \leq D(\rho,\rho')$.
Furthermore, the triangle inequality holds, i.e., for all $\rho$, $\rho'$ and $\rho''$, we have
$D(\rho,\rho'') \leq  D(\rho,\rho') + D(\rho',\rho'')$. 
Let $\Lambda,\Lambda' : \mA \rightarrow \mB$ be two quantum channels. If for all $\rho \in \mA$, we have have
$\Lambda(\rho) \equiv_\eps \Lambda'(\rho)$,
we may also write $\Lambda\equiv_\eps \Lambda'$.
Let $\rho_{A B} \in \setS(\mA \otimes \mB)$ be classical on $\mA$, i.e. $\rho_{AB} = \sum_{x\in \mX} P_X(x) \outp{x}{x} \otimes \rho_x$
for some distribution $P_X$ over a finte set $\mX$. We say that \emph{$A$ is $\eps$-close to
uniform with respect to $B$}, if $D(\rho_{AB}, \id_A / d \otimes \rho_B) \leq \eps$,
where $d = \dim(\mH_A)$.

\subsection{Uncertainty Relation and Privacy Amplification}

For random variables $X$ and $Y$ with joint
distribution $P_{X Y}$, the \emph{smooth conditional min-entropy}~\cite{RenWol05} 
can be expressed in terms of an optimization over events $\mE$ with probability at
least $1-\eps$. Let $P_{X \mE|Y=y}(x)$ be the probability that
$\{X=x\}$ \emph{and} $\mE$ occur conditioned on $Y=y$. We have
$$
H^\eps_{\min}(X|Y) = \max_{\mE: \Pr(\mE)\geq 1-\eps} \min_{y} \min_x (-\log P_{X \mE|Y=y}(x)).
$$
The smooth min-entropy allows us to use the following chain rule which does not hold in the case of standard min-entropy.
\begin{lemma}[Chain Rule \cite{Cachin97b,MauWol97,RenWol05}] \label{lem:chain}
Let $X$, $Y$, and $Z$ be arbitrary random variables over $\setX$, $\setY$ and $\setZ$. Then for all $\eps, \eps' > 0$,
$$
 H^{\eps+\eps'}_{\min}(X | YZ )  \geq H^\eps_{\min}(XY \mid Z) - \log|\setY| - \log(1/\eps').
$$
\end{lemma}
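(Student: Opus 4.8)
The plan is to argue directly from the event-based characterization of smooth min-entropy stated above. Write $\lambda := H^\eps_{\min}(XY \mid Z)$. By the characterization there is an event $\mE$ with $\Pr(\mE) \geq 1-\eps$ such that $P_{XY\mE|Z=z}(x,y) \leq 2^{-\lambda}$ for every $z$ in the support of $P_Z$ and every $(x,y)$. To prove the lemma it suffices to exhibit a single event $\mF$ with $\Pr(\mF) \geq 1-\eps-\eps'$ for which $P_{X\mF|YZ=(y,z)}(x) \leq 2^{-\lambda}\,|\setY|/\eps'$ for all $(y,z)$ in the support of $P_{YZ}$ and all $x$; taking $-\log$ of this bound then yields exactly $H^{\eps+\eps'}_{\min}(X|YZ) \geq \lambda - \log|\setY| - \log(1/\eps')$.

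The key idea is to throw away those pairs $(y,z)$ for which $P_{Y|Z=z}(y)$ is so small that passing from conditioning on $Z=z$ to conditioning on $(Y,Z)=(y,z)$ would blow up the bound. Call $(y,z)$ \emph{bad} if $P_{Y|Z=z}(y) < \eps'/|\setY|$, and let $\mF$ be the event that $\mE$ occurs \emph{and} $(Y,Z)$ is not bad. First I would bound the probability of the bad set: for each fixed $z$ there are at most $|\setY|$ bad values of $y$, each of conditional probability below $\eps'/|\setY|$, so $\sum_{y \text{ bad}} P_{Y|Z=z}(y) \leq \eps'$, and averaging over $z$ gives $\Pr(\text{bad}) \leq \eps'$. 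Hence $\Pr(\mF) \geq \Pr(\mE) - \Pr(\text{bad}) \geq 1-\eps-\eps'$, as needed.

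Next I would check the min-entropy bound for $\mF$. Fix $(y,z)$ in the support of $P_{YZ}$ and any $x$. If $(y,z)$ is bad, then $\mF$ is incompatible with $\{Y=y, Z=z\}$, so $P_{X\mF|YZ=(y,z)}(x)=0$ and there is nothing to verify (here $-\log 0 = +\infty$ is harmless to the minimum in the characterization). If $(y,z)$ is good, then using $\mF \subseteq \mE$, the definition of conditional probability, the property of $\mE$, and goodness of $(y,z)$,
$$
P_{X\mF|YZ=(y,z)}(x) \;\leq\; P_{X\mE|YZ=(y,z)}(x) \;=\; \frac{P_{XY\mE|Z=z}(x,y)}{P_{Y|Z=z}(y)} \;\leq\; \frac{2^{-\lambda}}{\eps'/|\setY|} \;=\; 2^{-\lambda}\,\frac{|\setY|}{\eps'}.
$$
Plugging $\mF$ (an admissible event for smoothing parameter $\eps+\eps'$) into the characterization of $H^{\eps+\eps'}_{\min}(X|YZ)$ finishes the proof.

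I do not expect a real obstacle; the argument is essentially careful bookkeeping. The one place deserving attention is the definition of the bad set and the verification that restricting $\mF$ to the good pairs costs at most $\eps'$ in probability while buying precisely the factor $|\setY|/\eps'$ in the min-entropy — in particular keeping the conditioning on $Z=z$ consistent throughout, and confirming that zero-probability pairs $(y,z)$ contribute nothing problematic to the minimization.
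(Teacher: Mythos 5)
Your proof is correct. The paper itself gives no proof of this lemma (it is quoted from the cited references), and your argument is exactly the standard one from those sources: discard the pairs $(y,z)$ with $P_{Y|Z=z}(y)<\eps'/|\setY|$ at a cost of $\eps'$ in the smoothing parameter, and on the remaining pairs convert the bound on $P_{XY\mE|Z=z}$ into a bound on $P_{X\mE|YZ=(y,z)}$ by dividing by $P_{Y|Z=z}(y)\geq\eps'/|\setY|$. The bookkeeping (probability of the bad set, the case split for bad pairs, and restricting the minimum to the support of $P_{YZ}$) is all handled correctly.
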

We also need the following monotonicity of the smooth min-entropy 
$$H^{\eps}_{\min}(XY \mid Z)  \geq H^\eps_{\min}(X \mid Z).$$

A function $h: \setS \times \setX \rightarrow \{0,1\}^\ell$ is called a \emph{two-universal hash function} \cite{CarWeg79}, if
for all $x_0 \neq x_1 \in \setX$, we have $\Pr [h(S,x_0) = h(S,x_1)] \leq 2^{-\ell}$
if $S$ is uniform over $\setS$.
We thereby say that a random variable $S$ is \emph{uniform over} a set $\setS$ if
$S$ is chosen from $\setS$ according to the uniform distribution.
For example, the class of all functions from $\setS \times \setX \rightarrow \{0,1\}^\ell$ is
two-universal. \emph{Privacy amplification} shows a two-universal hash function
can be used to extract an almost random string from
a source with enough min-entropy.
The following theorem is from \cite{serge:new}, stated slightly differently than
the original statements in \cite{RenKoe05,Renner05}.

\begin{theorem}[Privacy Amplification \cite{RenKoe05,Renner05}] \label{thm:QBS-ILL}
Let $X$ and $Z$ be (classical) random variables distributed over $\setX$ and $\setZ$, and let $Q$
be a random state of $q$ qubits.
Let $h: \setS \times  \setX \rightarrow \{0,1\}^\ell$ be a two-universal hash function and let
$S$ be uniform over $\setS$. If
$$
\ell \leq H^{\eps'}_{\min}(X \mid Z) - q -  2\log(1/\eps)
$$,
then $h(S,X)$ is $(\eps+2\eps')$-close to uniform with respect to $(S,Z,Q)$.
\end{theorem}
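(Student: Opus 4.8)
The plan is to regard the adversary's whole side information as a single register $E:=(Z,Q)$ and to reduce the statement to the quantum privacy amplification theorem against quantum adversaries of \cite{RenKoe05,Renner05}, which we use in the form: if $\ell\le H^{\eps'}_{\min}(X\mid E)-2\log(1/\eps)$ for \emph{any} register $E$, then $h(S,X)$ is $(\eps+2\eps')$-close to uniform with respect to $(S,E)$. Granting this, the only thing left to supply is the estimate that adjoining the $q$-qubit register $Q$ to the conditioning costs at most $q$ bits of min-entropy, namely $H^{\eps'}_{\min}(X\mid ZQ)\ge H^{\eps'}_{\min}(X\mid Z)-q$. Feeding this into the hypothesis $\ell\le H^{\eps'}_{\min}(X\mid Z)-q-2\log(1/\eps)$ immediately gives $\ell\le H^{\eps'}_{\min}(X\mid ZQ)-2\log(1/\eps)$, so the theorem follows by applying privacy amplification with $E=(Z,Q)$.

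For the min-entropy estimate I would first prove its non-smooth form, $H_{\min}(X\mid ZQ)_\rho\ge H_{\min}(X\mid Z)_\rho-q$, for every state $\rho_{XZQ}$ that is classical on $XZ$ (including sub-normalised ones). Writing such a state as $\rho_{XZQ}=\sum_{x,z}P_{XZ}(x,z)\,\ketbra{x}{x}\otimes\ketbra{z}{z}\otimes\rho_Q^{x,z}$ and using that each $\rho_Q^{x,z}$ is a density operator, hence $\rho_Q^{x,z}\le\id_Q$, one obtains for every fixed $z$
$$
\sum_x P_{X\mid Z=z}(x)\,\ketbra{x}{x}\otimes\rho_Q^{x,z}\ \le\ \big(\max_x P_{X\mid Z=z}(x)\big)\,\id_X\otimes\id_Q\ =\ 2^{\,q-H_{\min}(X\mid Z=z)}\;\id_X\otimes\tfrac{\id_Q}{2^q}.
$$
Thus the maximally mixed state $\id_Q/2^q$ is an admissible choice in the operational definition of min-entropy and yields $H_{\min}(X\mid ZQ)_\rho\ge H_{\min}(X\mid Z)_\rho-q$. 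To move to the smooth version I would condition on the event $\mE$ attaining $H^{\eps'}_{\min}(X\mid Z)$: since $\mE$ involves only the classical variables $X,Z$, it commutes with $Q$, preserves the classical-quantum form, has probability at least $1-\eps'$, and moves the state by at most $\eps'$ in trace distance; applying the non-smooth bound to the conditioned (sub-normalised) state then gives $H_{\min}(X\mid ZQ)\ge H^{\eps'}_{\min}(X\mid Z)-q$ on it.

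To conclude, let $\tilde\rho_{XZQ}$ be $\rho_{XZQ}$ conditioned on $\mE$; then $D(\rho_{XZQ},\tilde\rho_{XZQ})\le\eps'$ and $H_{\min}(X\mid ZQ)_{\tilde\rho}\ge\ell+2\log(1/\eps)$, so the non-smooth privacy amplification theorem applied to $\tilde\rho$ makes $h(S,X)$ $\eps$-close to uniform with respect to $(S,Z,Q)$ in the state $\tilde\rho$. Because appending a uniform $S$, applying $h$, and tracing out $X$ are trace-distance non-increasing, the triangle inequality carries this statement back from $\tilde\rho$ to $\rho$ at a total extra cost of $2\eps'$, giving the claimed $(\eps+2\eps')$-closeness. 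I do not expect a genuine difficulty here: essentially all the content is the one-line ``$q$ qubits cost at most $q$ bits'' estimate, and the rest is invoking a known theorem; the only point requiring care is keeping the sub-normalised states and the $\eps$-bookkeeping consistent when passing between the event-based notion of smooth min-entropy used in this paper and the trace-distance formulation underlying \cite{RenKoe05,Renner05} --- precisely the discrepancy alluded to in saying the theorem is ``stated slightly differently'' from its original form.
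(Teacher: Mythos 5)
The paper does not prove this theorem: it is imported verbatim (with the comment that it is ``stated slightly differently'' than the original) from the Damg{\aa}rd \emph{et al.}~paper \cite{serge:new}, which in turn rests on the Renner--K{\"o}nig leftover hash lemma \cite{RenKoe05,Renner05}. So there is no in-paper proof to compare against. That said, your derivation is exactly the one used in \cite{serge:new}: regard $E=(Z,Q)$ as the full side information, apply the fully quantum privacy-amplification theorem, and supply the missing ingredient as the chain-rule estimate $H^{\eps'}_{\min}(X\mid ZQ)\ge H^{\eps'}_{\min}(X\mid Z)-q$. Your operator-inequality argument for the non-smooth version is correct (each $\rho_Q^{x,z}\le\id_Q$, and $\id_Q = 2^q\cdot\id_Q/2^q$ exhibits the maximally mixed $\sigma_Q$ as admissible), and the $\eps$-bookkeeping works out: conditioning on the optimal event $\mE$ for $H^{\eps'}_{\min}(X\mid Z)$ gives a sub-normalised c-q state at trace distance at most $\eps'$ from $\rho$, the non-smooth PA on the conditioned state gives $\eps$, and two triangle inequalities (one to get into $\tilde\rho$, one to get back out, since the target $\id/2^\ell\otimes\tilde\rho_{SZQ}$ must also be moved back to $\id/2^\ell\otimes\rho_{SZQ}$) add $2\eps'$, giving $\eps+2\eps'$ total. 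The one point worth flagging explicitly is the tacit assumption that the smoothing event $\mE$ can be taken to be a function of $(X,Z)$ alone (so that conditioning commutes with the quantum register $Q$ and leaves each $\rho_Q^{x,z}$ untouched); this is harmless here because the paper's definition of smooth min-entropy is purely classical and event-based, and any event can be replaced by its ``best'' $(X,Z)$-measurable version without loss, but it is a detail one should state rather than leave implicit, since it is precisely the discrepancy between the event-based notion in this paper and the trace-distance ball used in \cite{RenKoe05,Renner05}.
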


The following lemma follows from the uncertainty
relation presented in \cite{serge:new} by a simple purification argument and by fixing
the parameter $\lambda$ such that the error is at most $\eps$. (see Appendix)

\begin{lemma} \label{lem:uncertainty2}
Let $X \in \{0,1\}^n$ be a uniform random string, let $B \in \{+,\times\}^n$ be a uniform random basis.
Let $\ket{X}_B = (\ket{X_{1}}_{B_1},\ldots, \ket{X_{n}}_{B_n})$ be a state of $n$ qubits, and
let $K$ be the outcome of an arbitrary measurement of $\ket{X}_B$, which does not depend on
$X$ and $B$. Then, for any $\eps$, we have
$$H^{\eps}_{\min}(X | B K) \geq \frac n 2 - 10 \sqrt[3]{n^2  \log{\frac{1}{\eps}}},$$
which is positive if $n > 8000 \log(1 / \eps)$.
\end{lemma}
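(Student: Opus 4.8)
The plan is to obtain Lemma~\ref{lem:uncertainty2} directly from the high-order entropic uncertainty relation of \cite{serge:new}, by choosing its free parameter $\lambda$ so that the associated error is at most $\eps$ and absorbing all lower-order terms into the generous constant in front of the cube root.

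First I would recall the relation in the shape proved in \cite{serge:new}: for uniform $X\in\{0,1\}^n$, uniform $B\in\{+,\times\}^n$, and any measurement of $\ket{X}_B$ whose classical outcome $K$ is produced before $B$ is announced, one has, for every $\lambda$ with $0<\lambda<\tfrac12$,
$$
H^{\eps(\lambda,n)}_{\min}(X\mid BK)\ \geq\ \Bigl(\tfrac12-c_1\lambda\Bigr)n-c_2 ,
$$
where $c_1,c_2$ are absolute constants and the error $\eps(\lambda,n)$ decreases exponentially with $\lambda^{3}n$ (this cubic behaviour, rather than the quadratic one of a naive variance bound, is what the term \emph{high-order} refers to). The one hypothesis that needs a word of comment is that \cite{serge:new} phrases the relation in the purified (EPR) picture: Alice holds one half of each of $n$ EPR pairs, sends the other halves, and only afterwards measures her halves in the random basis $B$ to read off $X$. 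A routine purification argument identifies this with the prepare-and-measure situation of the lemma: preparing $\ket{X}_B$ for uniform $X$ is exactly sending EPR halves and then measuring, and since the adversary's measurement touches only the transmitted qubits, its outcome $K$ is automatically independent of the later choice of $B$ and of $X$ — which is precisely the assumption that $K$ \emph{does not depend on $X$ and $B$}. If the cited statement is restricted to projective measurements, a Naimark dilation replaces the arbitrary measurement by a projective one on the qubits plus a fresh ancilla without changing $K$.

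Second, I would optimise $\lambda$: given the target $\eps$, choose $\lambda$ with $\eps(\lambda,n)\leq\eps$, which forces $\lambda=\Theta\bigl((\log(1/\eps)/n)^{1/3}\bigr)$, and substitute back. The deficit $c_1\lambda n+c_2$ then becomes $\Theta\bigl(n^{2/3}(\log(1/\eps))^{1/3}\bigr)+c_2$, and a routine numerical check — the only real arithmetic in the proof — shows that this quantity is at most $10\sqrt[3]{n^{2}\log(1/\eps)}$ (for small $n$ nothing is claimed, since the stated bound is then negative, so only large $n$ matters). The positivity condition is then immediate: $\tfrac n2>10\sqrt[3]{n^{2}\log(1/\eps)}$ iff $n^{1/3}>20(\log(1/\eps))^{1/3}$ iff $n>8000\log(1/\eps)$.

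The only genuine work lies in the first step — verifying that the hypotheses of the uncertainty relation of \cite{serge:new} (EPR versus prepare-and-measure, POVM versus projective, the moment at which $B$ is revealed) really do line up with those of the lemma — together with the bookkeeping that collapses the several error and deficit contributions of the original relation into the single clean constant $10$ and the single clean threshold $8000$. Neither part is deep; all of the substantive content sits inside the cited uncertainty relation.
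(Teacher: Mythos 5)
Your proposal follows essentially the same route as the paper: a purification/EPR argument to match the prepare-and-measure setting to the hypotheses of the uncertainty relation of \cite{serge:new}, followed by choosing $\lambda = 5\sqrt[3]{\log(1/\eps)/n}$ so that the error is at most $\eps$ and the deficit $2\lambda n$ becomes exactly $10\sqrt[3]{n^2\log(1/\eps)}$, with the trivial-case observation for $n \leq 8000\log(1/\eps)$. The only piece you compress is the ``routine numerical check'': the cited relation's error exponent is $\lambda^2 n/(32(2-\log\lambda)^2)$, not literally cubic in $\lambda$, and the paper devotes a separate calculus lemma to showing $(2-\log\lambda)^{-2} \geq \frac{e^3\ln(2)^2}{54}\lambda$ for $\lambda \leq 1/2$, which is what converts the exponent into the clean $\lambda^3 n/180 \geq \lambda^3 n/125$ needed for the constants $10$ and $8000$ to work out.
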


\section{Security in the Bounded-Quantum-Storage Model} \label{BQS-Model}\label{qframework}

We now give a definition of offline-security in the bounded-quantum-storage model,
and show that it allows protocols to be composed
\emph{sequentially}\footnote{Sequentially means that any given time only one sub-protocol is executed.}. 
Our definitions are closely related to~\cite{Smith01}.

First of all, we assume that there is a global clock, that divides time into discrete rounds.
We look at the following setting: Two \emph{players}, $\PlayerA$ and $\PlayerB$, execute a \emph{protocol} $\bP= (\bP_\PlayerA,\bP_\PlayerB)$, where $\bP_{\PlayerA}$ is the program
executed by $\PlayerA$ and $\bP_{\PlayerB}$ the program executed by $\PlayerB$. 
Before the first round, each program receives an input (that might be entangled with the input of the other player)
and stores it.
In each round, each program may 
first send/receive messages to/from a given functionality $\bG$, then
apply a quantum operation to its current internal storage (including the message space),
and finally send/receive further messages at the end of each round.
$\bG$ defines the communication resources available between the players, modeled as an interactive quantum functionality. It may contain a classical and/or a quantum communication channel, or other functionalities such as oblivious transfer or bit commitment. Finally, in the last step of the protocol each program outputs an output value.
Note that the execution of $\bP$ using $\bG$ --- denoted by $\bP(\bG)$ --- is a quantum channel, which
takes the input of both parties to the output of both parties.
We also use the term \emph{interface} of a player, to denote the interface presented by his program.

Players may be \emph{honest}, which means that they follow the protocol, or they may be \emph{corrupted}.
All corrupted players belong to the \emph{adversary}, $\setA \subset \{\PlayerA,\PlayerB\}$. We 
ignore the case where both players are corrupted, and we 
assume that this set is \emph{static}, i.e., it is already
fixed before the protocol starts.
We only consider the case where the adversary is
\emph{active}, i.e., the adversary may not follow the protocol. 
The adversary $\setA = \{p\}$ may replace his part of the protocol $\bP_p$ by another program $\bA_p$.
Opposed to $\bP_p$, $\bA_p$ receives some \emph{auxiliary (quantum) input} in the first round
that may also be entangled with the environment.
We do not restrict the computational power of $\bA_p$ in any way, however we do limit its internal quantum storage to a certain \emph{memory-bound} of $m$ qubits. We call such a $\bA_p$ \emph{$m$-bounded}. 
$\bA_p$ is allowed to perform arbitrary quantum operations in each round of the protocol. 
However after receiving his input, and after every round, all of his internal memory is measured, 
except for $m$ qubits.\footnote{Note that we enforce the memory bound after \emph{every} round to keep the model simple. Later, in the security proof of our randomized OT protocol, we see that the bound needs only to be enforced twice. A practical implementation may introduce a wait time at these points to make sure the quantum memory physically decoheres.}
$\bA_p$ is not allowed to input or output any additinal data \emph{during} the execution of the protocol.
The execution of $\bP$ using $\bG$, where $\bP_p$ has been replaced by $\bA_p$, is again a quantum channel, which maps the inputs of both players and the
auxiliary input of the adversary to the outputs produced by both programs.

The \emph{ideal functionality} defines what functionality we expect the protocol to implement.
For the moment we only consider  \emph{non-interactive} functionalities, i.e.,
both players can send it input only once at the beginning, and obtain the output only
once at the end. These functionalities have the form of a quantum channel.
To make the definitions more flexible, we allow $\bF$ to look differently depending on whether both players are honest, or either $\PlayerA$ or $\PlayerB$ belongs to
the adversary. So the ideal functionality is in fact a \emph{collection of functionalities},
$\bF = (\bF_{\emptyset}, \bF_{\{\PlayerA\}}, \bF_{\{\PlayerB\}})$.
$\bF_{\emptyset}$ denotes the functionality for the case when both players are honest,
 and $\bF_{\{\PlayerA\}}$ and
$\bF_{\{\PlayerB\}}$ for the cases when $\PlayerA$ or $\PlayerB$ respectively are dishonest.
We require that the honest player must always have the same interface as in $\bF_{\emptyset}$, i.e., in $\bF_{\{\PlayerA\}}$, $\PlayerB$ must have the same interfaces as in $\bF_{\emptyset}$, and in $\bF_{\{\PlayerB\}}$, $\PlayerA$ must have the same interfaces as in
$\bF_{\emptyset}$.
We also require that $\bF_{\{\PlayerA\}}$ and $\bF_{\{\PlayerB\}}$ allow the adversary to play honestly,
i.e., they must be at least as good for the adversary as the functionality $\bF_{\emptyset}$. 

We say that a protocol $\bP$ having access to the functionality $\bG$\footnote{$\bG$ may also be a collection
of functionalities.} securely
implements a functionality $\bF$, if the following conditions  are satisfied: First of all, we require that
the protocol has almost the same output as $\bF$, if both players are honest. Second, for $\setA = \{p\}$, we require that
the adversary 
attacking the protocol has basically no advantage over attacking
$\bF$ directly. We thus require that for every  $m$-bounded program $\bA_p$,
there exists a $s$-bounded program $\bS_p$ (called the \emph{simulator}), such that the overall outputs of
both situations is almost the same, for all inputs.
For simplicity, we do not make any restrictions on the efficiency of the simulators\footnote{Recall that
the adversary is computationally unbounded as well.}. 
Also, we do not require him to use the adversary $\bA_p$ as a black-box:
$\bS_{p}$ may be constructed
from scratch, under full knowledge of the behaviour of $\bA_p$.
In particular, we allow him to execute some or all actions of $\bA_p$ in a single round.
Recall, that a memory bound is applied only after each round.
Thus, when executing $\bA_p$ in a single round, the simulator will not experience any memory bound. This model is
motivated by the physically realistic assumption that such memory bounds are introduced by adding 
specific waiting times after each round. Hence, this does not give the simulator any memory.
However, in order to make protocols composable with other protocol
in our model,
we do require the simulator to be memory-bounded as well. 
The amount of memory
required by the simulator gives a bound on the \emph{virtual} memory the adversary seems to have by attacking the real protocol instead of the ideal one.
Ideally, we would like $\bS_{p}$ to use the same amount of memory as $\bA_p$.
The simulator $\bS_{p}$ can be represented by two quantum channels. The first channel maps the
input and the auxiliary input to an input to the ideal functionality, and to a state of at most $s$ qubit.
The other channel maps that state and the output of the ideal functionality to the output of the simulator.

\begin{definition} \label{def:bqm-sc-sec}
A protocol $\bP(\bF) = (\bP_{\PlayerA},\bP_{\PlayerB})(\bF)$ \emph{implements $\bG$ with an error of at most
$\eps$}, secure against $m$-bounded adversaries using $s$-bounded simulators, if
\begin{itemize}
\item(Correctness) $\bP(\bF_{\emptyset}) \equiv_\eps \bG_{\emptyset}$\;.
\item(Security for \PlayerA) For every $m$-bounded $\bA_{\PlayerB}$
there exists a $s$-bounded $\bS_{\PlayerB}$, such that
\[(\bP_{\PlayerA},\bA_{\PlayerB})(\bF_{\{\PlayerB\}}) \equiv_\eps \bS_{\PlayerB}(\bG_{\{\PlayerB\}})\;.\]
\item(Security for \PlayerB) For every $m$-bounded $\bA_{\PlayerA}$
there exists a $s$-bounded $\bS_{\PlayerA}$, such that
\[(\bA_{\PlayerA},\bP_{\PlayerB})(\bF_{\{\PlayerA\}}) \equiv_\eps \bS_{\PlayerA}(\bG_{\{\PlayerA\}})\;.\]
\end{itemize}
\end{definition}

An important property of our definition is that it allows protocols to be composed. The following theorem shows that in a secure protocol that is based on an ideal, non-interactive functionality $\bG$ and some other funtionalities $\bG'$\footnote{We denote the concatenation of the functionalities $\bG$ and $\bG'$ by $\bG \| \bG'$.}, we can replace $\bG$ with a secure implementation of $\bG$, without making the protocol insecure. The theorem requires that $\bG$ is called sequentially, i.e., that no other subprotocols are running parallel to $\bG$. 
The proof uses the same idea as in the classical case \cite{Canetti00b}.

\begin{theorem}[Sequential Composition Theorem] \label{thm:bqs-compose}
Let $\bF$ and $\bG$ be non-interactive functionalities, and $\bG'$ and $\bH$ be arbitrary functionalities.
Let $\bP(\bG \| \bG')$ be a protocol that calls $\bG$ sequentially and that implements $\bF$ with error of at most $\eps_1$ secure against $m_1$-bounded adversaries using $s_1$-bounded simulators,
and let $\bQ(\bH)$ be a protocol that implements $\bG$ with error of at most $\eps_2$
secure against $m_2$-bounded adversaries using $s_2$-bounded simulators, where $m_2 \geq s_1$. Then
$\bP(\bQ(\bH) \| \bG')$ implements $\bF$ with error at most $\eps_1 + \eps_2$,
secure against $\min(m_1,m_2)$-bounded adversaries using $s_2$-bounded simulators.
\end{theorem}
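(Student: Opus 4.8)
The plan is to check the three conditions of Definition~\ref{def:bqm-sc-sec} for $\bP(\bQ(\bH)\|\bG')$ by a two‑hop reduction to the security guarantees of $\bP$ and of $\bQ$, following the classical strategy of~\cite{Canetti00b}. I would prove correctness and, by the obvious symmetry, the corrupted‑$\PlayerB$ case; a corrupted $\PlayerA$ is handled identically with the roles swapped. The ``offline'' nature of the model --- the adversary sees the environment only through its auxiliary input at the start and its output at the end --- is what makes the hybrid argument go through without any bound on the environment.

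\emph{Correctness.} Since $\bQ$ uses only $\bH$ and $\bG$ is non‑interactive, hence invoked exactly once inside $\bP$, replacing $\bG$ by $\bQ(\bH_\emptyset)$ merely substitutes one quantum channel for another $\eps_2$‑close one, so monotonicity of the trace distance under channels gives $\bP(\bQ(\bH_\emptyset)\|\bG'_\emptyset)\equiv_{\eps_2}\bP(\bG_\emptyset\|\bG'_\emptyset)$; chaining with $\bP(\bG_\emptyset\|\bG'_\emptyset)\equiv_{\eps_1}\bF_\emptyset$ and the triangle inequality yields error $\eps_1+\eps_2$.

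\emph{Security.} Fix a $\min(m_1,m_2)$‑bounded adversary $\bA_{\PlayerB}$ against $\bP(\bQ(\bH_{\{\PlayerB\}})\|\bG'_{\{\PlayerB\}})$. \emph{Hop~1.} I would isolate the part of $\bA_{\PlayerB}$'s run that takes place during the single invocation of $\bG$. Because $\bG$ is called sequentially, nothing else runs in parallel then, so this part is precisely an adversary against $\bQ$ whose quantum auxiliary input is $\bA_{\PlayerB}$'s current internal register --- at most $\min(m_1,m_2)$ qubits, possibly entangled both with the environment and with the honest party's registers outside $\bQ_{\PlayerA}$. As $\min(m_1,m_2)\le m_2$, this restricted adversary is $m_2$‑bounded, and as $\bQ$'s security holds for \emph{arbitrary} such auxiliary input --- never read or disturbed by anyone --- it supplies an $s_2$‑bounded simulator that, interacting with the ideal $\bG_{\{\PlayerB\}}$, reproduces the joint state up to $\eps_2$. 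Splicing this simulator in for $\bA_{\PlayerB}$'s conduct during the $\bG$‑phase gives a program $\bA'_{\PlayerB}$ attacking $\bP(\bG_{\{\PlayerB\}}\|\bG'_{\{\PlayerB\}})$ with $(\bP_{\PlayerA},\bA_{\PlayerB})(\bQ(\bH_{\{\PlayerB\}})\|\bG'_{\{\PlayerB\}})\equiv_{\eps_2}(\bP_{\PlayerA},\bA'_{\PlayerB})(\bG_{\{\PlayerB\}}\|\bG'_{\{\PlayerB\}})$. \emph{Hop~2.} I would then apply the security of $\bP$ to $\bA'_{\PlayerB}$ to get a memory‑bounded simulator $\bS_{\PlayerB}$ with $(\bP_{\PlayerA},\bA'_{\PlayerB})(\bG_{\{\PlayerB\}}\|\bG'_{\{\PlayerB\}})\equiv_{\eps_1}\bS_{\PlayerB}(\bF_{\{\PlayerB\}})$, whence the triangle inequality gives $(\bP_{\PlayerA},\bA_{\PlayerB})(\bQ(\bH_{\{\PlayerB\}})\|\bG'_{\{\PlayerB\}})\equiv_{\eps_1+\eps_2}\bS_{\PlayerB}(\bF_{\{\PlayerB\}})$, as required.

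\emph{Main obstacle.} The delicate part is the memory bookkeeping, and this is where the quantitative hypotheses are spent. One must check that the intermediate adversary $\bA'_{\PlayerB}$ stays within $\bP$'s adversary budget $m_1$: outside the $\bG$‑call it is literally $\bA_{\PlayerB}$, holding $\le\min(m_1,m_2)\le m_1$ qubits; during the $\bG$‑call it runs the spliced‑in simulator, which --- by the modelling convention that a simulator may be executed within a single round and hence is subject to no memory bound while it runs --- is unconstrained during the call, and whose output (the state $\bA'_{\PlayerB}$ carries forward) is just the post‑$\bG$ state $\bA_{\PlayerB}$ itself would have had, already truncated to $\le\min(m_1,m_2)\le m_1$ qubits. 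One must likewise check that the simulator produced by $\bP$'s security respects the memory budget claimed for the composition; this is where the hypothesis $m_2\ge s_1$ and the exact bound on $\bQ$'s simulator enter, and getting this accounting right --- so that the composed protocol can itself be used as a sub‑object in a further composition --- is the crux of the argument. By contrast, the two genuinely quantum difficulties noted in the introduction --- that the simulator cannot learn the auxiliary state and that the state may be entangled with the environment --- cost nothing extra here: they are already built into the auxiliary‑input‑robust security statements of $\bP$ and $\bQ$ that the two hops invoke verbatim.
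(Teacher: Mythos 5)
Your overall strategy --- cutting the adversary into the parts before, during, and after the $\bG$-call, splicing in $\bQ$'s simulator to obtain a hybrid adversary $\bA'_{\PlayerB}$ against $\bP(\bG\|\bG')$, and then invoking $\bP$'s security together with the triangle inequality --- is exactly the paper's argument, and your correctness step matches as well.

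However, the memory accounting, which you rightly single out as the crux, contains a genuine error. You claim that during the $\bG$-call the spliced-in simulator $\bS^{(1)}_{\PlayerB}$ is ``subject to no memory bound while it runs'' because a simulator may be executed within a single round, so that $\bA'_{\PlayerB}$ stays within the outer adversary budget for free. This is false: by the paper's definition an $s$-bounded simulator consists of two channels with a quantum state of at most $s$ qubits held \emph{across} the call to the ideal functionality $\bG_{\{\PlayerB\}}$, and that call necessarily spans a round boundary of the outer execution, at which point $\bA'_{\PlayerB}$'s memory bound is enforced. Hence $\bA'_{\PlayerB}$ is only $\max(\min(m_1,m_2),s)$-bounded, where $s$ is the bound on $\bQ$'s simulator, and the hypothesis relating $\bQ$'s simulator bound to the adversary bound tolerated by the outer protocol (written $m_2\geq s_1$ in the theorem) is needed precisely here, to make $\bA'_{\PlayerB}$ admissible for $\bP$'s security --- not, as you write, to check the bound on the final simulator, which is read off directly from $\bP$'s security guarantee. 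This charging of the inner simulator's memory to the hybrid adversary is the entire reason the definition insists that simulators be memory-bounded at all. (In fairness, the paper's own sketch is internally inconsistent about which of $s_1,s_2,m_1,m_2$ plays which role --- the indices in the statement and in the proof appear to be swapped --- so your hedging about the final simulator's bound is understandable; but the conceptual point that $\bS^{(1)}_{\PlayerB}$'s memory counts against $\bA'_{\PlayerB}$ cannot be waved away.)
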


\begin{proof}(Sketch)
If both players are honest, the statement follows directly from the properties of the trace distance, since we
have $\bP(\bG_\emptyset \| \bG'_\emptyset) \equiv_{\eps_1} \bP(\bQ(\bH_\emptyset) \| \bG'_\emptyset)$, and hence
$\bF_\emptyset \equiv_{\eps_1 + \eps_2} \bP(\bQ(\bH_\emptyset) \| \bG'_\emptyset)$.

Let $\PlayerA$ be honest, and let $\PlayerB$ attack the protocol $\bP(\bQ(\bH) \| \bG')$ by
executing $\bA_\PlayerB$. We cut $\bA_\PlayerB$ into three parts.
Let $\bA^{(0)}_\PlayerB$ be executed before protocol $\bQ$ starts, $\bA^{(1)}_\PlayerB$ during $\bQ$, and
$\bA^{(2)}_\PlayerB$ after $\bQ$. Since $\bA^{(1)}_\PlayerB$ is $\min(m_1,m_2)$-bounded and $\bQ$ is secure, there exist a $s_1$-bounded $\bS^{(1)}_\PlayerB$, such that
$(\bQ_{\PlayerA},\bA^{(1)}_{\PlayerB})(\bH_{\{\PlayerB\}}) \equiv_{\eps_2} \bS^{(1)}_{\PlayerB}(\bG_{\{\PlayerB\}})$.
Let $\bA'_\PlayerB$ be the program that results from joining $\bA^{(0)}_\PlayerB$, $\bS^{(1)}_\PlayerB$, and $\bA^{(2)}_\PlayerB$. Because of $\max(\min(m_1,m_2),s_1) \leq m_2$, $\bA'_\PlayerB$ is $m_2$-bounded and $\bP$ is secure, there exists a $s_2$-bounded $\bS_\PlayerB$, such that
$(\bP_{\PlayerA},\bA'_{\PlayerB})(\bG_{\{\PlayerB\}} \| \bG'_{\{\PlayerB\}}) \equiv_{\eps_1} \bS_{\PlayerB}(\bF_{\{\PlayerB\}})$.
It follows now from the properties of the trace distance that $\bS_{\PlayerB}$ is a simulator that satisfies the security condition
for $\PlayerA$ with an error of at most $\eps_1 + \eps_2$.
The security for $\PlayerB$ can be shown in the same way.
\end{proof}

\paragraph{Interactive functionalities.}

The definitions above only apply to non-interactive functionalities, i.e. functionalities that consist of just one input/output phase.
In general, we would also like to securely implement functionalities with several such phases. 
The most prominent example of such a functionality is \emph{bit-commitment}, which has two phases, a \emph{commit-phase}, and an \emph{open-phase}.

The security definitions and the composition theorem generalize to the multi-phase case. Basically, 
all phases by themselves can be treated as individual, non-interactive functionalities, using the security definition
given above.
We can assume that the adversary always sends his internal classical and quantum state to the environment at the end of each phase, and
receives it back at the beginning of the next phase. The adversary can thus be modeled by individual
adversaries for each phase. However, since the ideal functionalities between the different phases are connected by some shared memory,
i.e., the actions of the functionality in the second phase may depend on the actions in the first phase,
the simulator must be allowed to use some \emph{classical} memory \emph{between} the rounds.

\section{Randomized Oblivious Transfer}\label{OTsecurity}

We now apply our framework to the randomized OT protocol 
presented in~\cite{serge:bounded}. In particular, we prove security
with respect to the following definition of randomized oblivious transfer.
We show in the appendix how to obtain the standard notion of
OT from randomized OT. Note that in our version of randomized OT,
also the choice bit $c$ of the receiver is randomized.

\begin{definition}[Randomized oblivious transfer] \label{def:rot}
$\ROT{1}{2}{\ell}$ (or, if $\ell$ is clear from the context, $\ROTT$) is defined as
$\ROTT = (\ROTT_\emptyset, \ROTT_{\{\PlayerA\}}, \ROTT_{\{\PlayerB\}})$, where
\begin{itemize}
\item $\ROTT_\emptyset$: The functionality chooses uniformly at random the value
$(x_0,x_1) \in_R \{0,1\}^{\ell} \times \{0,1\}^{\ell}$ and $c \in_R \{0,1\}$. It sends
$(x_0,x_1)$ to $\PlayerA$ and $(c,y)$ to $\PlayerB$ where $y = x_c$.
\item $\ROTT_{\{\PlayerA\}}$: The functionality receives $(x_0,x_1) \in \{0,1\}^{\ell} \times \{0,1\}^{\ell}$ from $\PlayerA$.
Then, it chooses $c \in_R \{0,1\}$ uniformly at random and sends
$(c,y)$ to $\PlayerB$, where $y = x_c$.
\item $\ROTT_{\{\PlayerB\}}$:  The functionality receives 
$(c,y) \in \{0,1\} \times \{0,1\}^{\ell}$ from $\PlayerB$. Then, it sets $x_c = y$, chooses $x_{1-c} \in_R \{0,1\}^{\ell}$ uniformly at random, and sends $(x_0,x_1)$ to $\PlayerA$.
\end{itemize}
\end{definition}

We first briefly recall the protocol. 
The protocol $\bqsOT = (\bqsOT_{\PlayerA}, \bqsOT_{\PlayerB}) $ uses a noiseless unidirectional quantum channel
$\qComm$, and a noiseless unidirectional classical channel $\comm$, both from the sender to the receiver.
Let $h: \mR \times \{0,1\}^n \rightarrow \{0,1\}^\ell$ be a two-universal hash function. 
The sender ($\PlayerA$) and receiver ($\PlayerB$) execute the following:

\begin{protocol}{$\bqsOT_{\PlayerA}$}{}
\item[1.] Choose $x \in_R \{0,1\}^n$ and $b \in_R \{0,1\}^n$ uniformly at random.
\item[2.] Send $\ket{x}_b := (\ket{x_1}_{b_1}, \dots, \ket{x_n}_{b_n})$ to $\qComm$, where $\ket{x_i}_{b_i}$ is $x_i$ encoded in the basis $b_i$.
\item[3.] Choose $r_0, r_1 \in_R \mR$ uniformly at random and send $(b, r_0, r_1)$ to $\comm$.
\item[5.] Output $(s_0,s_1) := (h(r_0,x_{|0}),h(r_1,x_{|1}))$, where $x_{|j}$ is the string of all $x_i$ where
$b_i = j$.
\end{protocol}

\begin{protocol}{$\bqsOT_{\PlayerB}$}{}
\item[1.] Choose $c \in_R \{0,1\}$ uniformly at random.
\item[2.] Receive the qubits $(q_1,\dots,q_n)$ from $\qComm$ and measure them in the basis $c$,
which gives output $x' \in \{0,1\}^n$.
\item[3.] Receive $(b, r_0, r_1)$ from $\comm$.
\item[4.] Output $(c,y) := (c,h(r_c,x'_{|c}))$, where $x'_{|c}$ is the string of all $x'_i$ where
$b_i = c$.
\end{protocol}

Note that the values $x_{|0}$, $x_{|1}$ and $x'_{|c}$ are in fact padded by additional $0$ to have a length
of $n$ bits. This padding does not affect their entropies. A memory bound is applied before step 1, and before step 3
of the receiver.

\paragraph{Security against the sender.}

We first consider the case when the sender, $\PlayerA$, is dishonest. 
This case turns out
to be quite straightforward. 
In general, we can describe any action of
the adversary by a unitary followed by a measurement in the computational basis. We use the following letters to refer to the different
classical and quantum registers available to the adversary: Let $\regQ$ denote the quantum register. Note that since
we assume that our adversary's memory is $m$-bounded, the size of $\regQ$ does not exceed $m$. Let $\regM_Q$ and $\regM_K$ denote
the quantum and classical registers, that hold the messages sent to the receiver. 
Let $\regK$ denote the classical input register of the adversary. 
Finally, let $\regA$ denote an auxiliary quantum register. 
Recall from Section~\ref{prelim}, that any quantum
operation on $\regQ$ and $\regM_Q$ can be implemented by a unitary followed by a measurement on an additional register $\regA$. 
Wlog we let $\regA$ and $\regM_Q$ be measured in the
computational basis to enforce a memory bound.

To model quantum and classical input that a malicious $\PlayerA$ may receive, we let $\regQ$ start out
in any state $\rho_{\inn}$, unknown to the simulator. Likewise, $\regK$ may contain some classical input $k_{\inn}$
of $\PlayerA$. Wlog we assume that all other registers start out in a fixed state
of $\ket{0}$.
We can then describe the actions of $\PlayerA$ by a single unitary $\bA_{\PlayerA}$ defined by
\begin{equation}\label{advTransform}
\bA_{\PlayerA}(\underbrace{\rho_{\inn}}_{\regQ} \otimes \underbrace{\outp{0}{0}}_{\regA} 
\otimes \underbrace{k_{\inn}}_{\regK} \otimes \underbrace{\outp{0}{0}}_{\regM_Q}
\otimes \underbrace{\outp{0}{0}}_{\regM_K}) 
\bA_{\PlayerA}^\dagger =
\underbrace{\rho_{\outt}}_{\regQ,\regA} \otimes \underbrace{k_{\inn}}_{\regK}
\otimes \underbrace{\rho_{x_b}}_{\regM_Q} \otimes \underbrace{\outp{b,r_0,r_1}{b,r_0,r_1}}_{\regM_K}.
\end{equation}
Note that without loss of generality $\bA_{\PlayerA}$ leaves $\regK$ unmodified: since $\regK$ is classical we can always
copy its contents to $\regA$ and let all classical output be part of $\regA$.
To enforce the memory bound, assume wlog that $\regA$ and $\regM_Q$ are now measured completely in the computational basis.
We now show that for any adversary $\bA_{\PlayerA}$ there exists an appropriate simulator $\bS_{\PlayerA}$.

\begin{lemma} \label{lem:secA}
Protocol $\bqsOT$ is secure against dishonest $\PlayerA$.
\end{lemma}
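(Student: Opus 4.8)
The plan is to exhibit an explicit simulator $\bS_{\PlayerA}$ for the ideal functionality $\ROTT_{\{\PlayerA\}}$ and verify that the real execution $(\bA_{\PlayerA},\bqsOT_{\PlayerB})(\qComm \| \comm)$ is $\eps$-close to (in fact, I expect, perfectly equal to) the ideal execution $\bS_{\PlayerA}(\ROTT_{\{\PlayerA\}})$. The key observation is that when $\PlayerA$ is dishonest, nothing is secret from $\PlayerA$ and there is no memory bound to worry about on the honest receiver's side; the only thing the simulator must do is translate $\bA_{\PlayerA}$'s behaviour into a pair $(x_0,x_1)$ to feed to the ideal functionality, in such a way that the receiver's output $(c,y)$ with $y=x_c$ is reproduced exactly.

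Concretely, I would let $\bS_{\PlayerA}$ run $\bA_{\PlayerA}$ internally on the auxiliary input $(\rho_{\inn},k_{\inn})$, obtaining the message registers: the (now classical, after the simulator's internal measurement) qubit string register $\regM_Q$ holding some $\tilde x \in \{0,1\}^n$ — here the simulator, playing the role of the receiver's channel, may measure each of the $n$ qubits in a basis of its choosing — and the classical register $\regM_K$ holding $(b,r_0,r_1)$. The simulator then sets $x_0 := h(r_0,\tilde x_{|0})$ and $x_1 := h(r_1,\tilde x_{|1})$ exactly as the honest sender's output rule prescribes, sends $(x_0,x_1)$ to $\ROTT_{\{\PlayerA\}}$, and outputs whatever $\bA_{\PlayerA}$ outputs (i.e. the final state of $\regQ,\regA,\regK$). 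Since the simulator runs all of $\bA_{\PlayerA}$ inside a single round it experiences no memory bound, so any $m$-bounded $\bA_{\PlayerA}$ yields an $s$-bounded $\bS_{\PlayerA}$ with $s$ no larger than the dimension of $\PlayerA$'s output register.

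To verify correctness I would compare the honest receiver's view in the two settings. In the real protocol, $\PlayerB$ picks $c\in_R\{0,1\}$, measures the received qubits in basis $c$ to get $x'$, receives $(b,r_0,r_1)$, and outputs $(c,h(r_c,x'_{|c}))$. In the ideal setting, $\ROTT_{\{\PlayerA\}}$ picks $c\in_R\{0,1\}$ and outputs $(c,x_c)=(c,h(r_c,\tilde x_{|c}))$. The choice bit $c$ has the same uniform distribution and is independent of everything $\PlayerA$ does in both cases; and for the fixed value of $c$, one checks that $x'_{|c}$ (the bits measured by the real receiver in basis $c$ on exactly the positions $i$ with $b_i=c$) has exactly the same distribution as $\tilde x_{|c}$ (the bits obtained by the simulator measuring those same positions in the same basis) — this is immediate once the simulator measures position $i$ in the basis $b_i$, or more carefully one notes that only the sub-string on positions with $b_i=c$ enters $y$ and on those positions measuring in basis $c=b_i$ is the honest receiver's action. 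Hence the joint distribution of $(\text{receiver output},\ \bA_{\PlayerA}\text{'s output})$ is identical, giving error $0$ on this branch; together with the trivial correctness check $\bqsOT(\ROTT_\emptyset?)$ — rather, $\bqsOT(\qComm_\emptyset\|\comm_\emptyset)\equiv_0\ROTT_\emptyset$ for honest players — the lemma follows.

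The only subtlety, and the step I would be most careful about, is the handling of the two bases $r_0$ and $r_1$ / the positions $b_i=c$ versus $b_i=1-c$: the simulator commits to a single string $\tilde x$ and hence to $(x_0,x_1)$ \emph{before} learning $c$, so I must make sure that measuring each qubit $i$ once in basis $b_i$ simultaneously produces the correct statistics for both possible values of $c$ — which it does, because for the branch $c=b_i$ it is exactly the honest measurement, and for the branch $c=1-b_i$ the position $i$ simply does not appear in $x'_{|c}$, so whatever value the simulator recorded there is irrelevant. Once this is spelled out, no approximation is needed and the lemma holds with error $\eps=0$ (or whatever ambient $\eps$ the definition carries), using a simulator whose memory bound $s$ equals that of $\bA_{\PlayerA}$.
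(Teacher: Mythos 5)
Your proposal is correct and follows essentially the same route as the paper: run $\bA_{\PlayerA}$ internally, measure $\regM_Q$ in the basis $b$ read off from $\regM_K$, feed the resulting hash values $(h(r_0,\tilde x_{|0}),h(r_1,\tilde x_{|1}))$ to $\ROTT_{\{\PlayerA\}}$, and pass $\PlayerA$'s residual state through untouched, so that $\bS_{\PlayerA}$ needs no more quantum memory than $\bA_{\PlayerA}$. You in fact spell out more carefully than the paper why measuring position $i$ in basis $b_i$ (rather than in the receiver's basis $c$) is harmless on the positions with $b_i\neq c$, which the paper dispatches with ``it is clear that\dots''.
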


\begin{proof}
Let $\bS_{\PlayerA}$ be defined as follows: $\bS_{\PlayerA}$ runs
$\bA_{\PlayerA}$\footnote{As described in Section~\ref{BQS-Model}, $\bS_{\PlayerA}$ can effectively skip the wait time required for the memory bound to take effect, since
he can execute $\bA_{\PlayerA}$ before his memory bound is applied.}, and
measures register $\regM_Q$ in the basis determined by $\regM_K$.
This allows him to compute $s_0 = h(r_0,x_{|0})$ and $s_1 = h(r_1,x_{|1})$. 
$\bS_{\PlayerA}$ then sends $s_0$ and $s_1$ to
$\ROTT_{\{\PlayerA\}}$. It is clear that since the simulator based his measurement on $\regM_K$, 
$s_0$ and $s_1$ are consistent with the run of the protocol. 
Furthermore, note that $\bS_{\PlayerA}$ did not need to touch register $\regQ$ at all.
We can thus immediately
conclude that the environment can tell no difference between the real protocol and the ideal setting.
\end{proof}

\paragraph{Security against the receiver.}

To prove security against a dishonest receiver requires a more careful treatment of the quantum
input given to the adversary. The main idea behind our proof is that the memory bound in fact
\emph{fixes} a classical bit $c$. Our main challenge
is to find a $c$ that the simulator can calculate and that is consistent with the adversary and his input, while
keeping the output state of the adversary intact. To do so, we use a generalization of
the \emph{min-entropy splitting lemma} in \cite{serge:new}, which in turn
is based on an earlier version of \cite{Wullsc07}. It states that if two
random variables $X_0$ and $X_1$ together have high min-entropy, than we can define
a random variable $C$, such that $X_{1-C}$ has least half of the original min-entropy.
To find $C$, one must know the distributions of $X_0$ and $X_1$.
In the following generalization, we do \emph{not} exactly know the distribution of $X_0$ and $X_1$,
since we assume that its distribution also depends on an unknown random variable $J$, distributed over a domain
of the size $2^\beta$. 
Note that $\beta = 0$ give the min-entropy splitting lemma in \cite{serge:new}.

\begin{lemma}[Generalized Min-Entropy Splitting Lemma]\label{lem:genSplit}
Let $\eps \geq 0$, and $0 < \beta < \alpha$. Let $J$ be a random variable over $\{0,\dots,2^\beta-1\}$,
and let $X_0$, $X_1$ and $K$ be random variables such that 
$H^\eps_{\min}(X_0 X_1 \mid K J) \geq \alpha$.
Let $f(x_0,x_1,k) = 1$, if there exists an $j \in \{0,\dots,2^\beta-1\}$ such that
$P_{X_1 \mid K J}(x_1,k,j) \geq 2^{-(\alpha - \beta) / 2}$, and $0$ otherwise, and let
$C := f(X_0,X_1,K)$. We have
\[H^\eps_{\min}(X_{1-C}C \mid K J) \geq \frac{\alpha - \beta} 2\;.\]
\end{lemma}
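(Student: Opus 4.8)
The approach is to reduce the generalized statement to the ordinary min-entropy splitting lemma of~\cite{serge:new} by "conditioning on $J$" and absorbing the resulting loss of $\beta$ bits of entropy. The key observation is that the defining function $f$ in the statement has been carefully chosen so that the event $\{C = 0\}$ is exactly the event that $x_1$ is "heavy" (has probability at least $2^{-(\alpha-\beta)/2}$) given $(K,J)$ for \emph{some} value of $j$; this is a property of $(x_0,x_1,k)$ alone and does not require knowing $J$, which is what makes $C$ computable by the simulator later. So the core inequality to establish is $H^\eps_{\min}(X_{1-C}C \mid KJ) \geq (\alpha-\beta)/2$.

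First I would pass to the smooth setting: let $\mE$ be the event with $\Pr(\mE) \geq 1-\eps$ achieving $H^\eps_{\min}(X_0X_1 \mid KJ) \geq \alpha$, so that $P_{X_0 X_1 \mE \mid KJ}(x_0,x_1,k,j) \leq 2^{-\alpha}$ for all $(x_0,x_1,k,j)$. I will use this same $\mE$ throughout, so all bounds below are in the $\eps$-smooth sense; it suffices to bound $-\log P_{X_{1-C} C \mE \mid KJ}(x,c,k,j)$ uniformly by $(\alpha-\beta)/2$. Now split on the value of $c$. When $c = 1$: by definition of $f$, the event $\{C=1\}$ forces $P_{X_1\mid KJ}(x_1,k,j) < 2^{-(\alpha-\beta)/2}$ for \emph{every} $j$, and in particular for the actual $j$; hence $P_{X_0 C \mE \mid KJ}(x_0,1,k,j) \leq P_{X_1 \mid KJ}(x_1,k,j)$ — wait, more carefully: $P_{X_0 C \mE\mid KJ}(x_0,1,k,j) = \sum_{x_1:\, f(x_0,x_1,k)=1} P_{X_0 X_1 \mE \mid KJ}(x_0,x_1,k,j) \leq \sum_{x_1:\, f=1} P_{X_1\mid KJ}(x_1,k,j)$, and each such term is $< 2^{-(\alpha-\beta)/2}$... that is not immediately a bound on the sum. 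The cleaner route for $c=1$: $P_{X_0 C \mE\mid KJ}(x_0,1,k,j) \leq \sum_{x_1} P_{X_0 X_1 \mE\mid KJ}(x_0,x_1,k,j) \cdot \mathbf{1}[f(x_0,x_1,k)=1]$; but when $C=1$, for \emph{this particular} $(x_1,k,j)$ we have $P_{X_1\mid KJ}(x_1,k,j)<2^{-(\alpha-\beta)/2}$, so instead bound $P_{X_1 C\mE \mid KJ}(x_1,1,k,j) \leq P_{X_1 \mid KJ}(x_1,k,j) < 2^{-(\alpha-\beta)/2}$ directly — and note that when $c=1$, $X_{1-C} = X_0$, so this is not quite the variable we want either. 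The honest bookkeeping is: for $c=1$ we need $P_{X_0 \mathbf{1}_{C=1}\mE\mid KJ}$; for $c=0$ we need $P_{X_1 \mathbf{1}_{C=0}\mE\mid KJ}$, and the $c=0$ case is the one where the heaviness condition $P_{X_1\mid KJ}(x_1,k,j)\geq 2^{-(\alpha-\beta)/2}$ holds for some $j$, which bounds the number of heavy $x_1$'s and hence, combined with the $2^{-\alpha}$ bound on the joint, gives the estimate.

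So the main obstacle — and the step I would spend the most care on — is the case analysis showing both $P_{X_0 C\mE\mid KJ}(x_0,1,k,j)\le 2^{-(\alpha-\beta)/2}$ and $P_{X_1 C\mE\mid KJ}(x_1,0,k,j)\le 2^{-(\alpha-\beta)/2}$. For $c=0$: fix $(k,j)$; the set $H_{k} = \{x_1 : \exists j',\ P_{X_1\mid KJ}(x_1,k,j')\geq 2^{-(\alpha-\beta)/2}\}$ has, for each fixed $j'$, at most $2^{(\alpha-\beta)/2}$ elements with $P_{X_1\mid KJ}(\cdot,k,j')$ above threshold, so $|H_k|\le 2^\beta\cdot 2^{(\alpha-\beta)/2} = 2^{(\alpha+\beta)/2}$ (using $|J|\le 2^\beta$); then $P_{X_1 C\mE\mid KJ}(x_1,0,k,j) \le \sum_{x_0} P_{X_0 X_1\mE\mid KJ}(x_0,x_1,k,j)$, and summing the $\le 2^{-\alpha}$ bound over... no — the right move is $P_{X_1\mathbf{1}_{x_1\in H_k}\mE\mid KJ}(x_1,k,j)\le \sum_{x_0}2^{-\alpha} = 2^{-\alpha}|\{x_0\}|$, which is vacuous; instead bound $P_{X_1 C\mE\mid KJ}(x_1,0,k,j)\le P_{X_1\mid KJ}(x_1,k,j)\le 1$ trivially and bound the \emph{whole distribution of $(X_1,C=0)$}... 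The clean argument, following~\cite{serge:new}, is: the worst case over the predicted variable is bounded by $\max\big(2^{-(\alpha-\beta)/2},\ \text{(heaviness count)}\cdot 2^{-\alpha}\big)$; since there are at most $2^{(\alpha+\beta)/2}$ heavy pairs contributing to $C=0$, the collision-style estimate yields $2^{-\alpha}\cdot 2^{(\alpha+\beta)/2}=2^{-(\alpha-\beta)/2}$. For $c=1$, $X_0$ is the output and $P_{X_0\mathbf{1}_{C=1}\mid K J}(x_0,k,j)\le \sum_{x_1:f=0}P_{X_0X_1\mid KJ} \le P_{X_1\mid KJ \text{ not heavy}}$ summed... where one uses that on $\{C=1\}$ the actual $x_1$ is light: $P_{X_0 X_1\mathbf{1}_{C=1}\mid KJ}(x_0,x_1,k,j)\le \min(2^{-\alpha}, P_{X_1\mid KJ}(x_1,k,j)) \le \sqrt{2^{-\alpha}\cdot 2^{-(\alpha-\beta)/2}}$, hmm that overshoots; the correct pairing is $P_{X_0 X_1\mathbf 1_{C=1}\mid KJ}\le 2^{-\alpha}$ and the number of light $x_1$ is unbounded, so one sums over $x_1$ using $\sum_{x_1}P_{X_1\mid KJ}\le 1$ after pulling out one factor: $\sum_{x_1:\,\text{light}} P_{X_0X_1\mE\mid KJ}(x_0,x_1,k,j) \le \sum_{x_1}\min(2^{-\alpha},P_{X_1\mid KJ}(x_1,k,j))$; bounding $\min(a,b)\le\sqrt{ab}$ is too lossy, so instead split the sum at threshold $2^{-(\alpha-\beta)/2}$: light $x_1$ have $P_{X_1\mid KJ}<2^{-(\alpha-\beta)/2}$, and $P_{X_0X_1\mE\mid KJ}\le 2^{-\alpha}$, giving each term $\le 2^{-\alpha}$ while $\sum_{x_1:\text{light}}P_{X_1\mid KJ}\le 1$, whence $\sum \le 2^{-\alpha}\cdot(\text{number of light }x_1\text{ appearing})$ — still need a count. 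The resolution actually used in the source lemma is the simple one: for $c=1$ bound $P_{X_0 C\mE\mid KJ}(x_0,1,k,j) \le \sum_{x_1}P_{X_0X_1\mathbf 1_{C=1}\mE\mid KJ} \le \sum_{x_1} \mathbf 1[\,P_{X_1\mid KJ}(x_1,k,j)<2^{-(\alpha-\beta)/2}\,]\cdot\min\!\big(2^{-\alpha},\,P_{X_1\mid KJ}(x_1,k,j)\big)$; since $\sum_{x_1}P_{X_1\mid KJ}(x_1,k,j)=1$, at most $2^{(\alpha-\beta)/2}$ values of $x_1$ can have $P_{X_1\mid KJ}\ge 2^{-(\alpha-\beta)/2}$ but those are excluded, and the remaining mass is $\le 1$ spread over terms each capped at $2^{-\alpha}$... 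I will resolve this bookkeeping by following the proof of the $\beta=0$ case in~\cite{serge:new} verbatim with $(K)$ replaced by $(K,J)$ and a factor $2^\beta = |J|$ inserted wherever the count of heavy values appears, which is exactly what degrades the final bound from $\alpha/2$ to $(\alpha-\beta)/2$. Finally, I would note $C$ is a deterministic function of $(X_0,X_1,K)$, so adjoining $C$ does not change which $\mE$ works and the smoothing parameter stays $\eps$; concluding $H^\eps_{\min}(X_{1-C}C\mid KJ)\ge(\alpha-\beta)/2$.
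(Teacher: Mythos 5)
Your proposal assembles the right ingredients --- the heavy sets $S^j_k$ and their union $S_k$ with $|S_k|\le 2^\beta\cdot 2^{(\alpha-\beta)/2}=2^{(\alpha+\beta)/2}$ (which is exactly where the loss from $\alpha/2$ to $(\alpha-\beta)/2$ comes from), a single smoothing event $\mE$ carried through unchanged because $C$ is a deterministic function of $(X_0,X_1,K)$, and the target bound $2^{-(\alpha-\beta)/2}$ --- and this is indeed the paper's route. But the case analysis, which you yourself identify as the step deserving the most care, is never brought to a correct conclusion, and what you do write down inverts the lemma's convention. You open by asserting that $\{C=0\}$ is the event that $x_1$ is heavy; the statement defines $f=1$ exactly when $x_1$ is heavy, so $C=1$ is the heavy case. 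This is not a harmless relabelling: under your convention the heavy case has $X_{1-C}=X_1$, and no bound $P_{X_1 C\,\mE\mid KJ}(x_1,0,k,j)\le 2^{-(\alpha-\beta)/2}$ can hold there, since a single heavy $x_1$ may carry probability close to $1$. The asymmetry of $f$ is designed precisely so that the heavy values are shunted to the side where the \emph{other} variable is bounded by counting.

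The correct pairing, which is the paper's two-line proof, is: for $C=0$ one has $x_1\notin S_k\supseteq S^j_k$, hence $P_{X_1 C\mid KJ}(x_1,0,k,j)\le P_{X_1\mid KJ}(x_1,k,j)< 2^{-(\alpha-\beta)/2}$ directly from the threshold; for $C=1$ one has $X_{1-C}=X_0$ and $P_{X_0 C\,\mE\mid KJ}(x_0,1,k,j)=\sum_{x_1\in S_k}P_{X_0X_1\mE\mid KJ}(x_0,x_1,k,j)\le |S_k|\cdot 2^{-\alpha}=2^{-(\alpha-\beta)/2}$. Your text attaches the counting argument to $c=0$ and the below-threshold argument to $c=1$ (at one point asserting that $C=1$ forces $P_{X_1\mid KJ}(x_1,k,j)<2^{-(\alpha-\beta)/2}$ for every $j$, which is the definition of $C=0$), then abandons several attempted estimates for $c=1$ as ``too lossy'' or ``vacuous'' and ends by deferring to the $\beta=0$ proof ``verbatim''. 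Since the entire content of the lemma is this case analysis, the proposal as written does not constitute a proof; with the two cases swapped back as above, it closes immediately.
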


\begin{proof}
Let $S^j_k$ be the set of values $x_1$ for which $P_{X_1 \mid K J}(x_1,k,j) \geq 2^{-(\alpha - \beta) / 2}$.
We have $|S^j_k| \leq 2^{(\alpha - \beta) / 2}$, since all values in $S^j_k$ have a probability that is at least $2^{-(\alpha - \beta) / 2}$.
Let $S_k := \bigcup_j S^j_k$. We have $|S_k| \leq 2^{\beta} \cdot 2^{ (\alpha - \beta) / 2} = 2^{(\alpha + \beta) / 2}$.

Let $K = k$ and $J = j$. 
Because $C=0$ implies that $X_1 \not \in S_k$, and thus also that $X_1 \not \in S^j_k$,
we have $P_{X_1 C \mid K J} (x_1,0,k,j) < 2^{(\alpha - \beta) / 2}$.
It follows from the assumption that there exists an event $\mE$ with probability $1- \eps$ such that for all $x_0$, $x_1$, $k$ and $j$, we have
$P_{X_0 X_1 \mE \mid K J} (x_0,x_1,k,j) \leq 2^{-\alpha}$. It follows that
\[P_{X_0 C \mE \mid K J} (x_0,1,k,j) = \sum_{x_1 \in S_k}  P_{X_0 X_1 \mE \mid K J} (x_0,x_1,k,j) \leq 2^{(\alpha + \beta) / 2} \cdot 2^{-\alpha} = 2^{-(\alpha - \beta) / 2}\;.\]
The statement follows.
\end{proof}

We now describe the actions of the adversary. Let $\regQ$ denote his quantum storage
register, and let $\regA$ denote an auxiliary quantum register as above.
Again, the size of $\regQ$ does not exceed $m$.
Let $\regK$ denote his classical input register, and 
let $\regM$ denote the register holding the 
quantum message he receives from the sender in step 2.
Let $\regE$ denote the message register
holding the classical messages he receives in step 3. 
Again, we assume that $\regQ$ is initialized
to his quantum input state $\rho_{\inn}$. 
Likewise, $\regK$ is initialized to his classical input $k_{\inn}$. All other registers
are initialized to $\ket{0}$. We can now describe the actions of the adversary by two unitaries, 
where a memory bound is applied after the first. The action of the adversary following step 2 can be described as
a unitary $\bA^{(1)}_{\PlayerB}$ similar to Eq.~\ref{advTransform}.
Note we can again assume that $\bA^{(1)}_{\PlayerB}$ leaves $\regK$ unmodified. 
To enforce the memory bound, we now let register $\regM$ and $\regA$ be measured in the computational basis.
We use $\rho_{\outt} \in \regQ$ to denote the adversaries quantum output, and $k_{\outt} \in \regM \otimes \regA$
to denote his classical output. 
After the memory bound is applied, the receiver obtains additional information from the sender. The actions of the
adversary after step 3 can then be described by a unitary $\bA^{(2)}_{\PlayerB}$ followed by a measurement
of quantum registers $\regM$ and $\regA$ in the computational basis.

In order to make the proof easier to understand, we build it up in 3 steps: First, we analyze the easy case where 
there is no quantum auxiliary input, which is essentially equivalent to the original security proof. Then
we extend it, by allowing the adversary some quantum auxiliary input of size $\beta$, pure and mixed.
We start with $\beta = 0$, but  keep $\beta$ as a parameter, so that we can later generalize
the statement.

\begin{lemma} \label{lem:secB-noAux}
Protocol $\bqsOT$ is secure against dishonest $\PlayerB$ with an error of at most $5\eps$,
if he receives no quantum (auxiliary) input, and his quantum memory is bounded
before step 1 and between step $2$ and $3$ by $m$ qubits, for
$$8 \ell + 2 \beta + 4 m \leq n - 20 \sqrt[3]{n^2 \log{\frac{1}{\eps}}} - 12 \log{\frac{1}{\eps}} - 4,$$
where $\beta$ is a parameter.
\end{lemma}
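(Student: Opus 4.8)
The plan is to construct the simulator $\bS_{\PlayerB}$ explicitly, interacting with $\ROTT_{\{\PlayerB\}}$, and then bound the trace distance between the real and ideal experiments. Since $\beta=0$ here and there is no quantum auxiliary input, the initial register $\regQ$ starts in a fixed state $\ket{0}$ (the parameter $\beta$ is carried along only for the later generalization). The simulator runs $\bA^{(1)}_{\PlayerB}$ on the state $\ket{x}_b$ that it itself prepares, using a uniformly random $x$ and a basis $b$ chosen as the honest sender would; then it has access to the adversary's post-measurement classical output $k_{\outt}$ and quantum memory $\rho_{\outt}$ of at most $m$ qubits. The key object is the classical bit $c$ that the simulator must commit to before learning the sender's correction information: I would define random variables $X_0 := x_{|0}$ and $X_1 := x_{|1}$ (the substrings of $x$ in the two bases), let $K$ collect the adversary's classical output $k_{\outt}$ together with $b$ minus the part needed later, and apply the Min-Entropy Splitting Lemma (Lemma~\ref{lem:genSplit} with $\beta=0$) to obtain $C = f(X_0,X_1,K)$ with $H^\eps_{\min}(X_{1-C}C \mid KJ) \geq \alpha/2$, where $\alpha$ is the min-entropy guaranteed by the uncertainty relation.

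\emph{First} I would establish the entropy bound feeding into the splitting lemma. The adversary's measurement after step~2 (the unitary $\bA^{(1)}_{\PlayerB}$ followed by measuring $\regM$ and $\regA$) produces classical outcome $k_{\outt}$; Lemma~\ref{lem:uncertainty2} applied to $X = x$, $B = b$, with $K = k_{\outt}$ the measurement outcome, gives $H^\eps_{\min}(X \mid B\, k_{\outt}) \geq n/2 - 10\sqrt[3]{n^2 \log(1/\eps)}$. But the adversary also retains $m$ qubits in $\regQ$; I would absorb these using Theorem~\ref{thm:QBS-ILL} (privacy amplification), which tolerates $q=m$ qubits of quantum side information at the cost of $m + 2\log(1/\eps)$ in the entropy budget — or, more cleanly, argue that conditioning on $m$ qubits reduces smooth min-entropy by at most $m$. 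Then the chain rule, Lemma~\ref{lem:chain}, splits $H^\eps_{\min}(X_0 X_1 \mid \cdots)$ appropriately. Setting $\alpha$ to be roughly $n/2 - 20\sqrt[3]{n^2\log(1/\eps)} - (\text{terms from }m, \eps)$, the splitting lemma yields that $X_{1-C}$ has min-entropy at least $\alpha/2 \geq 2\ell + \beta + m + 2\log(1/\eps) + \log(1/\eps)$ by the hypothesis $8\ell + 2\beta + 4m \leq n - 20\sqrt[3]{n^2\log(1/\eps)} - 12\log(1/\eps) - 4$ (dividing by $4$ and matching terms).

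\emph{Second}, the simulator sends $(c,y)$ to $\ROTT_{\{\PlayerB\}}$ where $c = C$ and $y = h(r_c, x_{|c})$ computed from the $r_0,r_1$ it chose as sender; it receives back $(x_0,x_1)$ with $x_c = y$ and $x_{1-c}$ uniform. It then resumes the adversary by running $\bA^{(2)}_{\PlayerB}$, feeding it the classical message $(b,r_0,r_1)$ exactly as the honest sender would. The correctness of $y = s_c$ is immediate since the simulator plays an honest sender's hashing step. The crux is that in the \emph{real} protocol $s_{1-c} = h(r_{1-c}, x_{|1-c})$, whereas the simulator hands the adversary a uniformly random $x_{1-c}$; by Theorem~\ref{thm:QBS-ILL}, since $x_{|1-c} = X_{1-C}$ has smooth min-entropy at least $\ell + m + 2\log(1/\eps)$ conditioned on everything the adversary holds (its $m$-qubit memory $\rho_{\outt}$, its classical state $k_{\outt}$, $C$, and $b$), $h(r_{1-c}, X_{1-C})$ is $(\eps + 2\eps)$-close to uniform given $(r_{1-c}, \rho_{\outt}, k_{\outt}, C, b)$. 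Hence replacing it with a fresh uniform string is undetectable.

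\emph{The main obstacle} I anticipate is the bookkeeping of the conditioning registers across the memory bound: I must verify that the quantity the splitting lemma operates on, $H^\eps_{\min}(X_0 X_1 \mid KJ)$, genuinely conditions on \emph{all} information available to the adversary \emph{at the moment $C$ is determined} — i.e. on $b$ (so that $X_0,X_1$ are well-defined as substrings), on $k_{\outt}$, and on the retained quantum register $\regQ$ — while \emph{not} conditioning on $r_0,r_1$, which must stay independent of everything for privacy amplification to apply. Getting the order right — uncertainty relation first (on the full $n$-bit string before the memory bound is even relevant, since the measurement outcome $k_{\outt}$ is classical), then accounting for the $m$ retained qubits, then chain rule to split into $X_0,X_1$, then the splitting lemma, then privacy amplification on $X_{1-C}$ — and tracking how each step consumes the entropy budget so that the final inequality matches $8\ell + 2\beta + 4m \leq n - 20\sqrt[3]{n^2\log(1/\eps)} - 12\log(1/\eps) - 4$, is the delicate part. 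The factor of $4$ in front of $m$ (rather than $2$) presumably arises because the quantum memory costs $m$ in the privacy-amplification step \emph{and} the entropy is halved by the splitting lemma, so one needs $\approx 2m$ before splitting, i.e. $4m$ after accounting for the final $\leq \alpha/4$ after both the splitting lemma's halving and the additional factor; I would double-check this constant by carefully propagating the $\geq$'s. The $5\eps$ error accumulates as: $\eps$ (uncertainty relation / smoothing) $+\, \eps$ (splitting lemma smoothing) $+\, \eps$ from chain rule's $\eps'$ $+\, 2\eps$ from the final privacy-amplification step ($\eps + 2\eps'$ with both set to $\eps$, but sharing smoothing) — I would reconcile the exact count against the statement's $5\eps$.
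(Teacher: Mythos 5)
Your proposal is correct and follows essentially the same structure as the paper's proof: the simulator plays an honest sender and runs $\bA^{(1)}_{\PlayerB}$, applies the uncertainty relation (Lemma~\ref{lem:uncertainty2}) to lower-bound $H^{2\eps}_{\min}(X\mid B\,K_{\outt}\,K_{\inn})$, uses the min-entropy splitting lemma (Lemma~\ref{lem:genSplit} with $\beta=0$) to extract $C$, the chain rule to account for revealing $S_C$, and privacy amplification (Theorem~\ref{thm:QBS-ILL} with $q=m$) to conclude $S_{1-C}$ is $5\eps$-close to uniform given the adversary's full classical and $m$-qubit quantum state, exactly matching the paper. One caution about your ``more cleanly'' alternative of subtracting $m$ from the smooth min-entropy before the splitting lemma: the paper deliberately keeps all conditioning classical until the final privacy-amplification step (where the $m$ qubits enter via the $q$ parameter), since the splitting lemma and the stated chain rule are purely classical; absorbing the quantum register earlier would need a quantum chain rule not provided here, so your first option is the one that works.
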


\begin{proof}
Let $K_{\inn}$ be the classical auxiliary input the adversary receives, and let
 $\ket{\Psi_{\inn}} = \ket{j}$ be the auxiliary quantum input for some fixed $j$ known to the simulator. 
First of all, the simulator simulates the actions of the sender following steps 1 and 2, using a random string
$X$ and a random basis $B$. The simulator then applies $\bA^{(1)}_{\PlayerB}$,
which gives him some classical output $K_{\outt}$, and a quantum state $\rho_{\outt}$.
It follows from the uncertainty relation of Lemma~\ref{lem:uncertainty2} that
$$
H^{2\eps}_{\min}(X \mid B K_{\outt} K_{\inn}) \geq \alpha\;,
$$
for $\alpha := n/2 - 10 \sqrt[3]{n^2 \log(1/\eps)}$. Let $(X_0,X_1) := X$, where $X_0:=X_{|0}$ and $X_1:=X_{|1}$ are 
the substrings of $X$ defined in the same way as in the protocol.
Note that since the simulator holds a description of $\ket{\Psi_{\inn}}$ and $\bA^{(1)}$, he knows
$P_{X_0 X_1 B K_{\outt} K_{\inn}}$, and thus we can apply Lemma~\ref{lem:genSplit},
for $K = (B,K_{\outt},K_{\inn})$ and a constant $J$ (or, $\beta = 0$).
Since the simulator knows the values $X_0$, $X_1$ and $K$, he can
calculate the value $C := f(X_0, X_1, K)$, for which we have
$$
H^{\eps}_{\min}(X_{1-C}C \mid K) \geq \frac{\alpha - \beta} 2\;.
$$
The simulator now chooses $R_0$ and $R_1$ uniformly at random and calculates
$S_0 = h(R_0,X_0)$ and $S_1 = h(R_1,X_1)$. Since $R_0$ and $R_1$ are independent of $X_0$, $X_1$ and $C$, we have
\[ H^{\eps}_{\min}(X_{1-C}C \mid K) =
H^{\eps}_{\min}(X_{1-C}C \mid R_C K)\;.\] 
Using the
chain rule from Lemma~\ref{lem:chain} and the monotonicity of $H^\eps_{\min}$, we obtain
\begin{eqnarray*}
H_{\min}^{2\eps}(X_{1-C} \mid C R_C K S_C)
&\geq& H_{\min}^{\eps}(X_{1-C} S_C C \mid R_C K)
- (\ell + 1) - \log{\frac{1}{\eps}}\\
&\geq& H_{\min}^{\eps}(X_{1-C} C \mid R_C K)
- (\ell + 1) - \log{\frac{1}{\eps}}\\
&\geq& \frac{\alpha - \beta} 2 - \ell - 1 - \log{\frac{1}{\eps}}\;.
\end{eqnarray*}
By using the privacy amplification Theorem~\ref{thm:QBS-ILL}, we get that $S_{1-C}$ is $5 \eps$ close
to uniform with respect to $(R_0,R_1,C,S_C,B,K_{\outt},K_{\inn})$ and $\rho_{\outt}$ if 
$$
\ell \leq \frac{\alpha - \beta} 2 - \ell - 1 - \log{\frac{1}{\eps}} - m - 2 \log{\frac{1}{\eps}}\;.
$$
By replacing $\alpha$ and rearranging the terms we get the claimed equation.

The simulator now sets $Y := S_C$, 
and sends $(C,Y)$ to $\ROTT_{\{\PlayerB\}}$. To complete the simulation, he runs
$\bA^{(2)}_{\PlayerA}$ as the adversary would have.
Note that the simulator did not require any more memory than the adversary itself, i.e., 
we can take $\bS_\PlayerB$ to be $m$-bounded as well.
Clearly, the simulator determined $C$ solely from the classical output of the adversary and
thus the adversaries output state in the simulated run is equal to the original output 
state of the adversary $\rho_{\outt} \otimes k_{\outt}$. Since the only difference between
the simulation and the real execution is that in the simulation, $S_{1-C}$ is chosen completely at random,
the simulation is $5 \eps$-close to the output of the real protocol.
\end{proof}

We now show how to extend the above analysis to the case where the adversary's input is pure. Note that
if the adversary's input is pure, the adversary cannot be entangled with the environment.

\begin{lemma} \label{lem:secB-withAux}
Protocol $\bqsOT$ is secure against dishonest $\PlayerB$ with an error of at most $5\eps$,
if he receives a pure state quantum (auxiliary) input, and his quantum memory is bounded
before step 1 by $\beta$ qubits, and between step $2$ and $3$ by $m$ qubits, for
$$8 \ell + 2 \beta + 4 m \leq n - 20 \sqrt[3]{n^2 \log{\frac{1}{\eps}}} - 12 \log{\frac{1}{\eps}} - 4.$$
\end{lemma}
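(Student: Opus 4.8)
The plan is to reduce the pure-state case directly to the no-auxiliary-input case of Lemma~\ref{lem:secB-noAux} by absorbing the adversary's pure quantum input into his unitary. Since the auxiliary state $\ket{\Psi_{\inn}}$ is pure and lives on a register of size $\beta$ (the first memory bound), the simulator knows a classical description of it --- it is part of the specification of the adversary against which the simulator is constructed. First I would observe that the memory bound of $\beta$ qubits applied \emph{before step 1} is exactly what is needed: after receiving $\ket{\Psi_{\inn}}$ the adversary retains at most $\beta$ qubits, so we may think of $\ket{\Psi_{\inn}}$ itself as a $\beta$-qubit state passing through that bound. The simulator prepares $\ket{\Psi_{\inn}}$ from scratch (it can, since it has the description), loads it into $\regQ$, and then proceeds exactly as in the proof of Lemma~\ref{lem:secB-noAux}: it simulates the sender's steps 1 and 2 with a uniform string $X$ and basis $B$, applies $\bA^{(1)}_{\PlayerB}$ to the joint register, reads off the classical output $K_{\outt}$, and is left with a quantum state $\rho_{\outt}$ on at most $m$ qubits after the second memory bound.

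The key point is that Lemma~\ref{lem:uncertainty2} still applies verbatim: $K_{\outt}$ is the outcome of an arbitrary measurement performed on $\ket{X}_B$ together with ancillary systems (here, the register initialized to $\ket{\Psi_{\inn}}$ and the workspace), and crucially this measurement does not depend on $X$ and $B$ since $\ket{\Psi_{\inn}}$ is a fixed state independent of the sender's choices. Hence $H^{2\eps}_{\min}(X \mid B K_{\outt} K_{\inn}) \geq \alpha$ with $\alpha = n/2 - 10\sqrt[3]{n^2\log(1/\eps)}$ as before. From here the argument is identical to Lemma~\ref{lem:secB-noAux} with $J$ a constant ($\beta = 0$ in the application of Lemma~\ref{lem:genSplit}): split $X = (X_0, X_1)$, define $C := f(X_0, X_1, K)$ with $K = (B, K_{\outt}, K_{\inn})$ --- the simulator can compute $C$ since it knows $P_{X_0 X_1 B K_{\outt} K_{\inn}}$ from the description of $\ket{\Psi_{\inn}}$ and $\bA^{(1)}_{\PlayerB}$ --- obtain $H^\eps_{\min}(X_{1-C} C \mid K) \geq \alpha/2$, choose $R_0, R_1$, apply the chain rule (Lemma~\ref{lem:chain}) and monotonicity, and conclude via privacy amplification (Theorem~\ref{thm:QBS-ILL}, with $q = m$ qubits in $\rho_{\outt}$) that $S_{1-C}$ is $5\eps$-close to uniform given $(R_0, R_1, C, S_C, B, K_{\outt}, K_{\inn}, \rho_{\outt})$ under the stated bound on $\ell$. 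The simulator then sets $Y := S_C$, sends $(C, Y)$ to $\ROTT_{\{\PlayerB\}}$, and runs $\bA^{(2)}_{\PlayerB}$; since $C$ was extracted only from classical data the adversary's output state is untouched, and the simulation is $5\eps$-close to the real execution. Note the bound on $\ell$ is the same as in Lemma~\ref{lem:secB-noAux} because $\beta$ enters only through the size of the auxiliary register, which here plays no role in the entropy bookkeeping (it is absorbed into the "arbitrary measurement" of Lemma~\ref{lem:uncertainty2}); one keeps $\beta$ as a parameter precisely so the final mixed-state lemma can reuse this statement.

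The main obstacle --- really the only subtlety --- is justifying that the simulator legitimately "knows" and can prepare $\ket{\Psi_{\inn}}$. This is handled by the modelling conventions already set up in Section~\ref{BQS-Model}: the simulator is constructed with full knowledge of $\bA_\PlayerB$, is not required to use it as a black box, and in the pure-input regime the input state is part of that specification rather than something handed in adversarially at runtime (as emphasized in the sentence preceding the lemma, a pure input cannot be entangled with the environment, so there is no hidden correlation to preserve). Once this is granted, no new estimates are needed --- the proof is essentially a one-line invocation of Lemma~\ref{lem:secB-noAux} with $\ket{\Psi_{\inn}}$ wired in --- and the genuinely new work is deferred to the mixed-state case, where the input \emph{can} be entangled with the environment and the parameter $\beta$ will finally be used non-trivially.
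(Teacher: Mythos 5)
There is a genuine gap, and it is at exactly the point you identify as ``really the only subtlety.'' Your proof assumes the simulator knows a classical description of $\ket{\Psi_{\inn}}$ and can prepare it from scratch, treating the input state as part of the adversary's specification. In the paper's model this is not allowed: the auxiliary input is supplied by the environment at runtime, the simulator is a single fixed channel chosen once per adversary $\bA_\PlayerB$, and the closeness condition $(\bP_{\PlayerA},\bA_{\PlayerB})(\bF_{\{\PlayerB\}}) \equiv_\eps \bS_{\PlayerB}(\bG_{\{\PlayerB\}})$ must hold \emph{for all} inputs of that one channel. Hard-wiring $\ket{\Psi_{\inn}}$ gives a different simulator for each auxiliary input, which inverts the quantifiers of Definition~\ref{def:bqm-sc-sec} (and would break the composition theorem, where $\bS_\PlayerB$ must be a fixed program that can be substituted for $\bA_\PlayerB$). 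The paper is explicit that the simulator must work ``even if the auxiliary quantum input remains completely unknown to him''; purity of the input removes entanglement with the environment but does not make the state known.

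The paper's actual proof handles this by invoking the \emph{generalized} min-entropy splitting lemma (Lemma~\ref{lem:genSplit}) with $\beta>0$ rather than $\beta=0$: one fixes a basis $\{\ket{j}\}_{j<2^\beta}$ of the auxiliary register, lets $J$ index the (unknown) basis state, and uses the fact that $f$ is defined by an existential quantifier over $j$, so that $C=f(X_0,X_1,K)$ is computable without knowing $j$ or $P_J$. The price is the loss of $\beta/2$ bits of min-entropy, $H^\eps_{\min}(X_{1-C}C\mid KJ)\geq(\alpha-\beta)/2$, which is precisely where the $2\beta$ term in the stated bound comes from --- your claim that ``$\beta$ plays no role in the entropy bookkeeping'' contradicts both the lemma statement and its purpose. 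Moreover, because the resulting extraction is $j$-independent, the simulator is one fixed linear map on $\regQ$, which extends by linearity to superpositions $\sum_j\alpha_j\ket{j}$ (the displayed computation in the proof) and then, in Lemma~\ref{lem:secB-mixedAux}, to mixed and entangled inputs via the spanning argument. Your version, with a state-dependent simulator, has no such linear extension, so the ``genuinely new work'' you defer to the mixed-state case could not be carried out there either.
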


\begin{proof}
Let $\ket{j}$ for $j \in \{0, \dots, 2^{\beta}\}$ be a basis for the quantum auxiliary input.
Any fixed auxiliary input $\ket{j}$ and $k_{\inn}$ fixes a distribution $P_{X_0X_1K \mid J=j}$, where $K$
is the classical value the adversary has after second memory bound. Using the same argumentation as in
Lemma \ref{lem:secB-noAux}, but now using the generalized min-entropy splitting lemma with $\beta > 0$,
we can construct a simulator that does not need to know $j$, nor the distribution $P_J$.

Hence, the simulator can construct a 
linear transformation acting on registers $\regQ$, $\regM$, $\regA$, $\regK$, 
$\regX$, $\regB$, $\regR$, and $\regC$ combing the actions of $\bA_\PlayerA^{(1)}$ and the extraction of $c$
using the function $f$ as defined above. We have
\begin{eqnarray*}
&&\bS^{2}_{\PlayerB}(
\sum_j \alpha_j 
\underbrace{\ket{j}}_\regQ \otimes
\underbrace{\ket{x_b}}_\regM \otimes \underbrace{\ket{0}}_\regA
\otimes \underbrace{\ket{k_{\inn}}}_\regK \otimes \underbrace{\ket{x}}_\regX \otimes 
\underbrace{\ket{b}}_\regB \otimes \underbrace{\ket{r_0,r_1}}_\regR \otimes
\underbrace{\ket{0}}_\regC \otimes \underbrace{\ket{0}}_\regY) =\\
&&\sum_{q,m_1,a_1} \alpha_{q,m_1,a_1}
\underbrace{\ket{q}}_\regQ \otimes
\underbrace{\ket{m_1}}_\regM \otimes \underbrace{\ket{a_1}}_\regA
\otimes \underbrace{\ket{k_{\inn}}}_\regK \otimes \underbrace{\ket{x}}_\regX \otimes 
\underbrace{\ket{b}}_\regB \otimes \underbrace{\ket{r_0,r_1}}_\regR \otimes
\underbrace{\ket{c}}_\regC \otimes \underbrace{\ket{s_0,s_1}}_\regY)
\end{eqnarray*}
for any pure state input $\ket{\Psi_{\inn}} = \sum_j \alpha_j \ket{j}$.
Wlog, all registers except $\regQ$ are now measured in the computational basis as the memory bound takes effect.
The input state $\ket{\Psi_{\inn}}$ will define the distribution of $J$ for the generalized min-entropy splitting
lemma. The rest follows as above.
\end{proof}

It remains to address the case where the receiver gets a mixed state quantum input. This is
the case where the adversary receives a state that is entangled with the environment. Note that this means
that we must decrease the size of the adversaries memory: If he could receive an entangled state of
$\beta$ qubits as input, he could use it to increase his memory to $m + \beta$ qubits by teleporting $\beta$ qubits
to the environment, and storing the remaining $m$. Hence, we now have to take the adversary to
be $m'$-bounded, where $m' := m - \beta$. Luckily, using a
a similar argument as in~\cite{watrous:zks}, we can now extend the argument given above: 
Note that for any pure state input 
$\ket{\Psi} = \ket{\Psi_{\inn}} \otimes k_{\inn}$, the output of the simulated adversary
is \emph{exactly} 
$\Lambda(\outp{\Psi}{\Psi})$, where $\Lambda$ is the adversaries channel.
Since $\{\outp{\Psi}{\Psi}| \ket{\Psi} \in \regQ \otimes \regK, \norm{\ket{\Psi}} = 1\}$ spans all
of $\setT(\regQ \otimes \regK)$ and the map given by the simulation procedure is the same
as $\Lambda$ on all inputs, we can conclude that the complete map is equal to $\Lambda$. 
Note that the simulator does not need to consider the $\beta$ qubits that the adversary might
have teleported to the environment: we can essentially view it as part of the original adversaries
quantum memory, and the simulator bases his decision solely on the classical output
of the adversary. Hence:

\begin{lemma} \label{lem:secB-mixedAux}
Protocol $\bqsOT$ is secure against dishonest $\PlayerB$ with an error of at most $5\eps$,
if he receives a quantum (auxiliary) input, and his quantum memory is bounded
before step 1 by $\beta$ qubits and between step $2$ and $3$, by $m$ qubits, for
$$8 \ell + 6 \beta + 4 m \leq n - 20 \sqrt[3]{n^2 \log{\frac{1}{\eps}}} - 12 \log{\frac{1}{\eps}} - 4.$$
\end{lemma}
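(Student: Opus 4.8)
The plan is to reduce the mixed-input case to the pure-input case (Lemma~\ref{lem:secB-withAux}) by purification, exactly as sketched in the paragraph preceding the statement. Concretely, suppose the adversary $\bA_\PlayerB$ is $m'$-bounded with $m' := m - \beta$, and receives a mixed auxiliary input $\rho_\inn$ that is part of a pure state $\ket{\Psi}$ shared with the environment. The key observation is that the entire adversary-plus-simulator construction from Lemma~\ref{lem:secB-withAux} is linear: for \emph{any} pure input $\ket{\Psi} \in \regQ \otimes \regK$, running the real protocol against $\bA_\PlayerB$ produces a channel $\Lambda$ applied to $\outp{\Psi}{\Psi}$, and the simulator $\bS_\PlayerB$ (which extracts $C$ from the classical output only, leaving the quantum register $\regQ$ untouched) reproduces exactly $\Lambda(\outp{\Psi}{\Psi})$ up to trace distance $5\eps$, provided the entropy inequality holds with $\beta$ replaced by the dimension bound on $\regQ$ at the first memory bound.

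First I would make precise the accounting of the memory bound. Before step~1 the adversary is allowed $\beta$ qubits of quantum memory; if his ``true'' (non-teleported) memory is to be $m'$ qubits after teleporting, he can act as though he holds $m' + \beta$ qubits at the first bound, hence we apply the pure-state lemma with the role of ``$m$ before step 1'' played by $\beta$ and the role of ``$m$ between steps 2 and 3'' played by $m = m' + \beta$. Substituting $m \mapsto m'+\beta$ into the inequality of Lemma~\ref{lem:secB-withAux}, namely $8\ell + 2\beta + 4m \leq n - 20\sqrt[3]{n^2 \log(1/\eps)} - 12\log(1/\eps) - 4$, gives $8\ell + 2\beta + 4(m'+\beta) = 8\ell + 6\beta + 4m' \leq n - 20\sqrt[3]{n^2 \log(1/\eps)} - 12\log(1/\eps) - 4$, which is the claimed bound after renaming $m'$ to $m$. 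So the stated inequality is exactly what the purification argument yields.

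Next I would invoke linearity to lift from pure to mixed inputs. By Lemma~\ref{lem:secB-withAux}, the simulator's output channel agrees with the real adversary's channel $\Lambda$ (to within $5\eps$) on every \emph{pure} product state $\outp{\Psi}{\Psi}$ with $\ket{\Psi} \in \regQ \otimes \regK$. Since such rank-one projectors span $\setT(\regQ \otimes \regK)$ and both the real execution and the simulation are linear (CPTP) maps, they agree on all of $\setT(\regQ \otimes \regK)$, in particular on any mixed $\rho_\inn \otimes \outp{k_\inn}{k_\inn}$ and — by further tensoring with the environment's purifying system and using that the simulator acts as the identity on that system — on any state entangled with the environment. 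The $\beta$ teleported qubits are simply regarded as part of the adversary's original quantum register that happens to be physically relocated; the simulator never touches $\regQ$, so this relocation is invisible to it, and correctness follows. I should note the $5\eps$ error is preserved because trace distance is non-increasing under the (identity) channel on the environment and under tracing out.

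The main obstacle — really the only delicate point — is justifying that the simulator's behaviour is genuinely a fixed linear map independent of the input, so that agreement on a spanning set of pure states forces agreement everywhere. This requires checking that the min-entropy splitting step (Lemma~\ref{lem:genSplit}) and the privacy-amplification step (Theorem~\ref{thm:QBS-ILL}) are applied \emph{uniformly}: the function $f$ defining $C$ depends only on the fixed protocol parameters $\alpha, \beta$ and on classical values the simulator computes, not on the unknown distribution $P_J$, and the entropy bound $H^\eps_{\min}(X_{1-C}C \mid K) \geq (\alpha-\beta)/2$ holds for the mixture because it holds conditioned on each value $J=j$ and the splitting lemma was stated to tolerate precisely this unknown conditioning. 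Once this uniformity is in hand, the watrous-style spanning argument (already cited in the paragraph before the statement) closes the proof with no further computation.
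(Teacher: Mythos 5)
Your proposal is correct and follows essentially the same route as the paper: the teleportation accounting that replaces the effective memory $m$ by $m+\beta$ in the bound of Lemma~\ref{lem:secB-withAux} (yielding the coefficient $6\beta$), combined with the Watrous-style spanning/linearity argument and the observation that the simulator extracts $C$ from classical outputs only and never touches $\regQ$. Your added remark on the uniformity of the min-entropy splitting step is a reasonable elaboration of what the paper leaves implicit, not a different approach.
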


The following theorem follows now directly from Lemma~\ref{lem:secA} and~\ref{lem:secB-mixedAux}.

\begin{theorem} \label{thm:bqsOT}
Protocol $\bqsOT(\qComm \| \comm)$ implements $\ROT{1}{2}{\ell}$ with an error of at most
$5\eps$, secure against $m$-bounded adversaries using $m$-bounded simulators, if
$$
8 \ell + 10 m \leq n - 20 \sqrt[3]{n^2 \log{\frac{1}{\eps}}} - 12 \log{\frac{1}{\eps}} - 4.
$$
\end{theorem}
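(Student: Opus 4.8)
The plan is to obtain Theorem~\ref{thm:bqsOT} as an immediate corollary of the correctness of the honest execution together with the two security lemmas already proved, Lemma~\ref{lem:secA} and Lemma~\ref{lem:secB-mixedAux}; the only genuine work is to reconcile the two memory parameters appearing in Lemma~\ref{lem:secB-mixedAux} with the single memory bound $m$ of the model of Section~\ref{BQS-Model}.

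First I would verify correctness, i.e.\ $\bqsOT(\qComm_\emptyset \| \comm_\emptyset) \equiv_{5\eps} \ROTT_\emptyset$. Because $\qComm$ is noiseless, the honest receiver's measured string $x'$ coincides with $x$ on every position $i$ with $b_i = c$, so $x'_{|c} = x_{|c}$ and $y = h(r_c, x'_{|c}) = h(r_c, x_{|c}) = s_c$, exactly the relation demanded by $\ROTT_\emptyset$. The bit $c$ is uniform and independent of everything else, and since the $0$- and $1$-positions of the random $b$ are disjoint, $x_{|0}$ and $x_{|1}$ are independent and, given $b$, uniform on about $n/2$ bits each; applying Theorem~\ref{thm:QBS-ILL} with trivial side information and $q=0$ (valid since $8\ell \leq n - 20\sqrt[3]{n^2\log\frac{1}{\eps}} - 12\log\frac{1}{\eps} - 4$ forces $\ell$ well below $n/4$) shows that $(s_0,s_1)$ is $\exp(-\Omega(n))$-close to the uniform pair output by $\ROTT_\emptyset$, which is comfortably below $5\eps$.

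Next, security for $\PlayerA$ is Lemma~\ref{lem:secA} verbatim: for every $m$-bounded $\bA_\PlayerA$ the simulator built there reproduces the real output \emph{exactly} (error $0 \leq 5\eps$) and, running $\bA_\PlayerA$ in one round and then keeping only the $\leq m$-qubit register $\regQ$, is itself $m$-bounded. Security for $\PlayerB$ comes from Lemma~\ref{lem:secB-mixedAux}. In the model of Section~\ref{BQS-Model} the adversary's possibly entangled auxiliary quantum input is truncated to $m$ qubits already by the memory bound enforced in the first round, and the bound enforced between steps~2 and~3 again leaves at most $m$ qubits; so both parameters of Lemma~\ref{lem:secB-mixedAux} --- the bound ``$\beta$'' before step~1 and the bound ``$m$'' between steps~2 and~3 --- may be set equal to $m$. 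Plugging $\beta = m$ into the lemma's hypothesis $8\ell + 6\beta + 4m \leq n - 20\sqrt[3]{n^2\log\frac{1}{\eps}} - 12\log\frac{1}{\eps} - 4$ produces exactly the hypothesis $8\ell + 10m \leq n - 20\sqrt[3]{n^2\log\frac{1}{\eps}} - 12\log\frac{1}{\eps} - 4$ of the theorem, and the lemma then supplies, for every $m$-bounded $\bA_\PlayerB$, an $m$-bounded simulator $\bS_\PlayerB$ (its construction only ever retains $\regQ$, as already noted for Lemma~\ref{lem:secB-noAux}) with $(\bqsOT_\PlayerA,\bA_\PlayerB)(\qComm_{\{\PlayerB\}} \| \comm_{\{\PlayerB\}}) \equiv_{5\eps} \bS_\PlayerB(\ROTT_{\{\PlayerB\}})$. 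Assembling these three facts is precisely Definition~\ref{def:bqm-sc-sec} for $\bP = \bqsOT$, $\bF = \qComm \| \comm$, $\bG = \ROTT$, with error $5\eps$, $m$-bounded adversaries and $m$-bounded simulators.

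I expect no real obstacle in the theorem itself: all the hard analysis sits in Lemmas~\ref{lem:secB-noAux}--\ref{lem:secB-mixedAux}, and the theorem is a bookkeeping corollary. The single point that needs care is the memory accounting just sketched --- that a large \emph{entangled} auxiliary input confers no advantage, because the first memory bound cuts it down to $m$ qubits (this being exactly the refinement of the model that blocks the EPR/teleportation attack of the introduction), and that the simulators returned by Lemmas~\ref{lem:secA} and~\ref{lem:secB-mixedAux} are genuinely $m$-bounded, which matters because Theorem~\ref{thm:bqs-compose} controls the adversary's ``virtual memory'' in subsequent compositions through the simulator bound.
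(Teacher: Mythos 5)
Your proposal is correct and takes exactly the same route as the paper, which simply states that the theorem ``follows now directly from Lemma~\ref{lem:secA} and~\ref{lem:secB-mixedAux}''; the bookkeeping you spell out --- correctness from the noiselessness of $\qComm$ and privacy amplification on the honestly generated $x_{|0},x_{|1}$, security for $\PlayerA$ from Lemma~\ref{lem:secA} with zero error, and security for $\PlayerB$ from Lemma~\ref{lem:secB-mixedAux} with the substitution $\beta = m$, which turns $8\ell + 6\beta + 4m$ into $8\ell + 10m$ --- is exactly the omitted content. Your observation that the simulators of both lemmas are $m$-bounded (they only ever retain $\regQ$ and otherwise act classically) is the one extra check needed to match the ``$m$-bounded simulators'' clause, and it is correct.
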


Note that there are ways to improve on these parameters. For example, the splitting lemma
defines the function $f$ in an asymmetric way, which implies that for $C=1$, in fact the bound also holds for the
\emph{conditional} min-entropy of $X_0$ given $X_1$. Thus, we would not need to additionally
apply the chain rule for this case. We did not do this here to keep the proof simple.

\subsection{On Parallel Composition}

For efficiency, it would be important to know if the protocol from last section would also be secure under
 \emph{parallel} composition. Unfortunately, this is not an easy task:
First, consider 
executing the protocol in parallel, when the sender and receiver are the same for
each instance of the protocol. Clearly, the overall memory of the committer cannot exceed the amount of memory he
would be allowed for a single execution of the protocol: otherwise he could cheat in at least one instance of the protocol.
However, even when imposing such a constraint, parallel composition remains tricky: Second, consider the case where
we run two instances of the protocol in parallel, where the roles of the sender (initially Alice) and the 
receiver (initially Bob) are \emph{exchanged} in the second instance of the protocol. Let the malicious Bob behave
as follows: Upon reception of the quantum states in the first instance of the protocol, he immediately returns them 
unmeasured to Alice. Later, he sends the very same values $(b, r_0, r_1)$ back to Alice. Alice thus measures her
own states, in her own bases. Thus, her output $y$ of the second instance of the protocol will always be equal
either to $x_0$ or to $x_1$ of the first instance of the protocol. This is clearly something Bob would not be able to
do in an ideal setting. This simple example already shows that great care must be taken when composing such
protocols in parallel: no quantum memory was required to execute such an attack.

\section{Acknowledgments}
We thank Simon Pierre Desrosiers and Christian Schaffner for useful comments, and Dominique Unruh for
a kind explanation of his work.

\appendix

\section{Proof of Lemma \ref{lem:uncertainty2}}

To prove Lemma \ref{lem:uncertainty2}, we need Lemma~\ref{lem:bound} below and the following Theorem~\ref{thm:uncertainty}, which is Corollary 3.4 in the full version of \cite{serge:new}.

\begin{theorem}[Uncertainty Relation \cite{serge:new}] \label{thm:uncertainty}
Let $\rho \in \setS(\mH^{\otimes n}_2 )$ be an arbitrary quantum state. Let $\Theta = (\Theta_1,\dots, \Theta_n)$ be uniformly distributed
over $\{+,\times\}$ and let $X = (X_1, \dots ,X_n)$ be the outcome when measuring $\rho$ in basis $\Theta$. Then for any $0 < \lambda < \frac12$
\[ H^{\eps}_{\min}(X |\Theta) \geq (\frac12 - 2\lambda) n \]
with
$\eps = \exp \left ( - \frac{\lambda^2 n}{32  (2 - \log(\lambda))^2} \right )$.
\end{theorem}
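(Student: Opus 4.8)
The plan is to reduce the statement to a classical large-deviation estimate for the random \emph{surprisal} of the measurement outcome, using the event-based characterisation of smooth min-entropy recalled in Section~\ref{prelim}. Fix the event
\[
\mathcal E \;:=\; \bigl\{\, (x,\theta)\;:\; P_{X\mid\Theta=\theta}(x)\le 2^{-(\frac12-2\lambda)n}\,\bigr\}.
\]
For this $\mathcal E$ we have $P_{X\mathcal E\mid\Theta=\theta}(x)=P_{X\mid\Theta=\theta}(x)\cdot\mathbf 1[(x,\theta)\in\mathcal E]\le 2^{-(\frac12-2\lambda)n}$ for \emph{all} $x$ and $\theta$, so the claimed bound $H^{\eps}_{\min}(X\mid\Theta)\ge(\tfrac12-2\lambda)n$ follows once we show $\Pr(\mathcal E)\ge 1-\eps$, i.e.
\[
\Pr_{\Theta,X}\bigl[\,-\log P_{X\mid\Theta}(X\mid\Theta)<(\tfrac12-2\lambda)n\,\bigr]\;\le\;\eps. \qquad (\star)
\]

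To prove $(\star)$ I would expose the outcomes one position at a time. Let $\mathcal F_{i}:=\sigma(\Theta_{\le i},X_{\le i})$. Measuring qubits $1,\dots,i$ in bases $\theta_{\le i}$ does not involve $\theta_{>i}$, hence $P_{X_i\mid X_{<i},\Theta}$ depends only on $\theta_{\le i}$, and the chain rule for probabilities gives
\[
-\log P_{X\mid\Theta}(X\mid\Theta)\;=\;\sum_{i=1}^{n}Z_i,\qquad Z_i\;:=\;-\log P_{X_i\mid X_{<i},\Theta_{\le i}}\bigl(X_i\mid X_{<i},\Theta_{\le i}\bigr)\;\ge\;0.
\]
Conditioned on $\mathcal F_{i-1}$ and on $\Theta_i$, the variable $Z_i$ is the surprisal of the single-qubit measurement, in basis $\Theta_i$, of the (well-defined) reduced post-measurement state of qubit $i$ given the earlier outcomes, so $\mathbb E[Z_i\mid\mathcal F_{i-1},\Theta_i]$ equals the Shannon entropy of that outcome. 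Since $\Theta_i$ is uniform and independent of $\mathcal F_{i-1}$, averaging over its two values and invoking the Maassen--Uffink entropic uncertainty relation for the two BB84 bases --- for every single-qubit state the Shannon entropies of the outcomes of the two mutually unbiased basis measurements sum to at least one bit --- gives the per-round estimate $\mathbb E[Z_i\mid\mathcal F_{i-1}]\ge\tfrac12$, valid pointwise in $\mathcal F_{i-1}$.

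The remaining obstacle is that the $Z_i$ are unbounded, which rules out a direct Chernoff bound on $\sum_i Z_i$; I would handle this by truncation. Put $\widetilde Z_i:=\min(Z_i,\tau)$ for a level $\tau$ to be fixed. Conditioned on $\mathcal F_{i-1},\Theta_i$ the variable $Z_i$ is the surprisal of a binary random variable, so for $\tau\ge1$ only its less likely outcome can exceed $\tau$, and an elementary estimate of the form $\sup_{0<q\le 1/2}\,q\bigl(\log\tfrac1q-\tau\bigr)^{+}\le c\,2^{-\tau}$ with an absolute constant $c<1$ gives $\mathbb E[\widetilde Z_i\mid\mathcal F_{i-1}]\ge\tfrac12-c\,2^{-\tau}$. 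Now $\widetilde Z_i\in[0,\tau]$, so $\{\widetilde Z_i-\mathbb E[\widetilde Z_i\mid\mathcal F_{i-1}]\}_i$ is a martingale-difference sequence of range at most $\tau$, and the Azuma--Hoeffding inequality gives, for every $t>0$,
\[
\Pr\Bigl[\,\sum_{i=1}^n\widetilde Z_i< n\bigl(\tfrac12-c\,2^{-\tau}\bigr)-t\,\Bigr]\;\le\;\exp\!\Bigl(-\tfrac{2t^2}{n\tau^2}\Bigr).
\]
Since $\sum_i Z_i\ge\sum_i\widetilde Z_i$, this also upper bounds the probability in $(\star)$ as long as $n(\tfrac12-c\,2^{-\tau})-t\ge n(\tfrac12-2\lambda)$; taking $\tau=2-\log\lambda$ (so that $c\,2^{-\tau}\le\lambda$, since $\log$ is base $2$) and $t=\lambda n$ achieves this, and substituting into the displayed bound --- being somewhat generous with the constants --- gives $\eps=\exp\bigl(-\lambda^2n/(32(2-\log\lambda)^2)\bigr)$, which proves $(\star)$ and hence the theorem.

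The step I expect to be delicate is exactly this truncation together with the bookkeeping of constants: one must verify that the per-round uncertainty bound genuinely survives conditioning on earlier outcomes (which is why it matters that the single-pair relation holds for \emph{arbitrary} qubit states, including conditional ones), that the truncation loss $c\,2^{-\tau}$ decays exponentially in $\tau$ with the right dependence on $\lambda$, and that the slack $2\lambda$ is allocated between the truncation loss and the Azuma deviation term so that the exponent comes out as stated. The reduction in the first paragraph and the surprisal chain rule are routine, and the concentration step is standard once the increments are bounded.
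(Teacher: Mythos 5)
This theorem is stated in the paper without proof --- it is imported verbatim as Corollary 3.4 of the full version of \cite{serge:new} --- so there is no in-paper argument to compare against; your proof is correct and is essentially the one given in that reference: decompose the surprisal into conditional per-qubit surprisals, lower-bound each conditional expectation by $\tfrac12$ via the Maassen--Uffink relation for the two BB84 bases applied to the conditional post-measurement qubit state (which may be mixed, so the concavity extension of Maassen--Uffink is indeed needed, as you anticipated), then truncate at $\tau=2-\log\lambda$ and apply Azuma--Hoeffding, which is exactly where the factor $(2-\log\lambda)^2$ and the constant $32$ in the exponent of $\eps$ come from. Your constants also check out: the per-round truncation loss is at most $(\log(e)/e)\,2^{-\tau}\le\lambda/4$, so with deviation $t=\lambda n$ the Hoeffding form of Azuma gives $\exp\bigl(-2\lambda^2 n/(2-\log\lambda)^2\bigr)\le\eps$, comfortably within the stated bound.
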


\begin{lemma} \label{lem:bound}
For all $0 < x \leq 0.5$, we have
\[\frac{1}{(2 - \log(x))^{2}} \geq  \frac { e^3 \ln(2)^2} {54} \cdot  x \;.\]
\end{lemma}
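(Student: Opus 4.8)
The plan is to reduce the claim to a one-variable optimization and finish with a crude numerical check. First I would note that on the range $0<x\le \tfrac12$ we have $\log x\le -1$ (writing $\log$ for $\log_2$ as throughout the paper), so $2-\log x\ge 3>0$ and both sides of the asserted inequality are strictly positive. Multiplying through by $(2-\log x)^2$, the claim is therefore equivalent to
\[
\phi(x):=x\,(2-\log_2 x)^2\ \le\ \frac{54}{e^3(\ln 2)^2}\qquad\text{for all }0<x\le\tfrac12 .
\]
So it suffices to locate $\sup_{(0,1/2]}\phi$.

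Next I would show $\phi$ is strictly increasing on $(0,\tfrac12]$. Writing $L=L(x)=2-\log_2 x$ and using $L'(x)=-1/(x\ln 2)$, differentiation gives
\[
\phi'(x)=L^2+2xL\cdot L'(x)=L^2-\frac{2L}{\ln 2}=L\Bigl(L-\frac{2}{\ln 2}\Bigr).
\]
For $x\le\tfrac12$ the first factor satisfies $L\ge 3$, and the bracket satisfies $L-\tfrac{2}{\ln 2}\ge 3-\tfrac{2}{\ln 2}=0.114\ldots>0$ (here one needs $2/\ln 2<3$, i.e. $\ln 2>2/3$, which holds); hence $\phi'(x)>0$ on the whole interval. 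Since $\phi$ is continuous with $\phi(x)\to 0$ as $x\to 0^+$, the supremum is attained at the right endpoint: $\sup_{(0,1/2]}\phi=\phi(\tfrac12)=\tfrac12\cdot 3^2=\tfrac92$.

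It then remains to verify $\tfrac92\le \tfrac{54}{e^3(\ln 2)^2}$, i.e. $e^3(\ln 2)^2\le 12$, which holds comfortably: $e^3<(2.72)^3<21$ and $(\ln 2)^2<(0.7)^2<\tfrac12$, so $e^3(\ln 2)^2<10.5<12$. Chaining the three steps proves the lemma. The argument is entirely elementary and I do not anticipate a real obstacle; the only place demanding a little care is the sign of the bracket in $\phi'$, which is genuinely positive but small ($\approx 0.11$), so one must use $\ln 2>2/3$ rather than a looser bound. (Equivalently, substituting $u=4/x\ge 8$ turns the claim into $(\ln u)^2/u\le 27/(2e^3)$, and one checks that $u\mapsto(\ln u)^2/u$ is decreasing for $u\ge e^2$; this is the same computation in slightly different coordinates, and makes transparent that the inequality is not tight.)
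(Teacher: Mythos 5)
Your proof is correct, and it takes a genuinely different (and shorter) route than the paper. You clear denominators immediately and bound $\phi(x)=x\,(2-\log_2 x)^2$ directly, using the sign of $\phi'(x)=L(L-2/\ln 2)$ with $L=2-\log_2 x\ge 3$ to conclude $\phi$ is increasing on $(0,\tfrac12]$, so $\sup\phi=\phi(\tfrac12)=9/2 < 54/(e^3(\ln 2)^2)\approx 5.6$. The paper instead first reduces the statement to a \emph{derivative} inequality --- noting both sides vanish as $x\to 0^+$, it suffices to show $\tfrac{d}{dx}\,(2-\log x)^{-2}\ge e^3(\ln 2)^2/54$ --- which after rearranging becomes a bound on the \emph{cubic} $\tfrac{\ln 2}{2}\,x\,L^3$. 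The paper then locates an interior maximum of that cubic at $x=4/e^3\in(0,\tfrac12)$, where it equals exactly $54/(e^3(\ln 2)^2)$. So the paper's computation is tight at the derivative level and in fact explains where the odd-looking constant comes from, at the price of an extra reduction and a slightly more involved critical-point analysis (including a sign discussion around $4/e^3$). Your version skips the derivative reduction entirely, works with a quadratic rather than a cubic, and needs only the single elementary fact $\ln 2>2/3$ (equivalently $e^2<8$) for the monotonicity; the resulting endpoint bound $9/2$ is not tight but comfortably sufficient, and as you note it actually shows the lemma's constant could be improved to $2/9$. Both proofs are valid; yours is simpler, the paper's is sharper.
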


\begin{proof}
Since $(2 - \log(x))^{-2} \rightarrow 0$ for $x \rightarrow 0$, it
suffices to show that, for all $0 < x \leq 0.5$,
\[\frac{d}{dx} \frac{1}{(2 - \log(x))^{2}} = \frac{2}{\ln(2)}
\frac{1}{(2 - \log(x))^{3} x} \geq \frac { e^3 \ln(2)^2} {54}\;,\]
with is equivalent to require that
\[f(x) := \frac {\ln(2)}{2} (2 - \log(x))^{3} x \leq \frac{54} { e^3
\ln(2)^2}\;.\]
We have
\[f'(x) = -3 (2 - \log(x))^2 + (2 - \log(x))^3 \ln(2)\;,\]
and since the polynomial $-3 x^2 + x^3 \ln(2)$ has a double root at
$0$ and a single root at $3/\ln(2)$, and is positive if and only if $x > 3/\ln(2)$, it follows that
$f(x)$ has one double root at $4$, and one single root at $4/e^3$. It
is positive for $0 < x<4/e^3$ and negative for
$4/e^3 < x \leq 0.5$. Hence, $f(x)$ is maximal for $x=4/e^3$, where
$f(4/e^3) = 54 / ( e^3 \ln(2)^2)$.
\end{proof}

\begin{proof}[Proof of Lemma \ref{lem:uncertainty2}]
Following the standard approach (see also~\cite{serge:new}), we consider a purified
version of our situation: Alice creates $n$ EPR pairs, and sends the second half of each pair to Bob.
His measurement is then applied onto the second half of these pairs, which has output $K$.
Then, we choose uniform a random basis $\Theta \in \{+,\times\}^n$, and measure the first half in this basis, which gives us
the string $X$. The output of the purified situation is identical to the situation in the statement, however it allows us
to apply Corollary~\ref{thm:uncertainty}.

From $10\sqrt[3]{n^2  \log(1/\eps)} = n/2 \sqrt[3]{8000 \log(1/\eps)/n}$ follows that
$n/2 > 10 \sqrt[3]{n^2  \log(1/\eps)}$ if and only if $n > 8000 \log(1 / \eps)$.
Thus, nothing has to be shown if $n \leq 8000  \log(1 / \eps)$.
If $n > 8000 \log(1 / \eps)$, we
choose $\lambda :=  5 \sqrt[3]{1/n \cdot \log(1/\eps)}$  and $\lambda' := 1/n \cdot \log(1/\eps)$, and get
\[
\lambda =  5 \sqrt[3]{\frac 1 n \cdot \log(1/\eps)} < 5 \sqrt[3]{\frac 1 { 20^{3} \cdot \log(1 / \eps)} \cdot \log(1/\eps)} = 5 \sqrt[3]{ \frac 1 {{20}^{3}}} = \frac 1 4\;.\]
The statement follows from
\begin{align*}
\exp \left ( - \frac{\lambda^2 n}{32 \cdot (2 - \log(\lambda))^2} \right )
\leq \exp \left ( - \frac{\lambda^3 n }{180} \right )
= 2^{- \frac{\log(e)}{180}  \lambda^3 n}
\leq 2^{- \frac{1}{5^3} \lambda^3 n}
\leq 2^{- \log(1/\eps)}
= \eps\;.
\end{align*}
\end{proof}

\section{Oblivious Transfer from ROT}

Oblivious transfer is defined as follows:

\begin{definition}[Oblivious transfer]
The functionality $\OT{1}{2}{\ell}$ receives input $(x_0,x_1) \in \{0,1\}^{\ell} \times \{0,1\}^{\ell}$ from
$\PlayerA$ and $c \in \{0,1\}$ from
$\PlayerB$, and sends
$y := x_c$ to $\PlayerB$.
\end{definition}

The protocol $\OTfromROT$, proposed in
\cite{BBCS92}, securely implements $\OTT$ using $\ROTT$ and $\comm$.

\begin{protocol}{$\OTfromROT_{\PlayerA}$}{}
\item Receive input $(x_0,x_1) \in \{0,1\}^{\ell} \times \{0,1\}^{\ell}$ and
$(x'_0,x'_1) \in \{0,1\}^{\ell} \times \{0,1\}^{\ell}$ from $\ROTT$.
\item Receive $d \in \{0,1\}$ from $\comm$.
\item Send $(m_0,m_1) \in \{0,1\}^{\ell} \times \{0,1\}^{\ell}$
to $\comm$, where $m_i := x_i \oplus x'_{i \oplus d}$.
\end{protocol}

\begin{protocol}{$\OTfromROT_{\PlayerB}$}{}
\item Receive input $c \in \{0,1\}$ and $(c',y') \in
 \{0,1\} \times \{0,1\}^{\ell}$ from $\ROTT$.
\item Send $d := c' \oplus c$ to $\comm$.
\item Receive $(m_0,m_1) \in \{0,1\}^{\ell} \times \{0,1\}^{\ell}$ from $\comm$ and output $y := m_c \oplus y'$ to $\PlayerB$.
\end{protocol}

\begin{theorem} \label{thm:ROTfromOTactive}
For every $m > 0$, $\OTfromROT(\ROT{1}{2}{\ell}\|\comm)$ implements $\OT{1}{2}{\ell}$ with no error,
secure against $m$-bounded adversaries using $m$-bounded simulators.
\end{theorem}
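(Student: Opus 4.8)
The plan is to check the three conditions of Definition~\ref{def:bqm-sc-sec} directly, taking $\ROTT$ and $\comm$ as the available functionalities and $\OTT$ as the target. Every check reduces to elementary arithmetic over $\mathbb{F}_2$, so the error will be exactly $0$; the memory bound plays no active role, and each simulator will merely run the given adversary together with a constant amount of classical bookkeeping, hence is $m$-bounded whenever the adversary is. For correctness, when both players are honest $\PlayerB$ sends $d = c'\oplus c$, $\PlayerA$ replies with $m_i = x_i\oplus x'_{i\oplus d}$, and $\PlayerB$ outputs $m_c\oplus y' = x_c\oplus x'_{c\oplus d}\oplus x'_{c'} = x_c$ since $c\oplus d = c'$; as $\PlayerA$ has no output, $\OTfromROT(\ROTT_\emptyset\|\comm_\emptyset)\equiv_0\OTT_\emptyset$.

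For a dishonest receiver I would take $\bS_\PlayerB$ to run $\bA_\PlayerB$ on its (classical and quantum) auxiliary input, read off the pair $(c',y')$ it would send to $\ROTT_{\{\PlayerB\}}$ and the bit $d$ it would send to $\comm$ — both determined before any reply, since $\ROTT_{\{\PlayerB\}}$ returns nothing to $\PlayerB$ — then query $\OTT_{\{\PlayerB\}}$ with $\hat c := c'\oplus d$, receive $y = x_{\hat c}$, hand $\bA_\PlayerB$ the message $(m_0,m_1)$ where $m_{\hat c} := y\oplus y'$ and $m_{1-\hat c}$ is drawn uniformly at random, and output whatever $\bA_\PlayerB$ outputs. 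This is a perfect simulation: in a real run the only thing $\bA_\PlayerB$ ever receives from the protocol is $(m_0,m_1)$, in which $m_{\hat c} = x_{\hat c}\oplus x'_{c'} = x_{\hat c}\oplus y'$ is reproduced exactly and $m_{1-\hat c} = x_{1-\hat c}\oplus x'_{1-c'}$ is uniform and independent of $\bA_\PlayerB$'s whole view, because $\ROTT_{\{\PlayerB\}}$ draws $x'_{1-c'}$ uniformly and never discloses it. Since the honest $\PlayerA$ has no output, the joint state of $\bA_\PlayerB$, $\PlayerA$, and the environment coincides in the two worlds, and $\bS_\PlayerB$'s quantum memory never exceeds that of $\bA_\PlayerB$.

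For a dishonest sender I would take $\bS_\PlayerA$ to run $\bA_\PlayerA$, read off $(x'_0,x'_1)$, feed it a uniformly random bit in place of the message $d$ it expects from $\comm$, read off the resulting $(m_0,m_1)$, and query $\OTT_{\{\PlayerA\}}$ with $\hat x_j := m_j\oplus x'_{j\oplus d}$ for $j\in\{0,1\}$, again outputting whatever $\bA_\PlayerA$ outputs (the ideal functionality returns nothing to $\PlayerA$, so no pause is needed). The substituted $d$ is correctly distributed because in a real run $d = c'\oplus c$ with $c'$ chosen fresh and uniformly by $\ROTT_{\{\PlayerA\}}$, hence uniform and independent of both the honest $\PlayerB$'s input $c$ and $\bA_\PlayerA$'s view; and the honest $\PlayerB$'s real output is $m_c\oplus y' = m_c\oplus x'_{c\oplus d} = \hat x_c$, exactly what the ideal functionality delivers on input $\hat x$. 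Hence the simulation is again exact and memory-preserving.

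The only points requiring genuine care — where I would concentrate the write-up — are (i) confirming that this $\mathbb{F}_2$ bookkeeping makes the simulator's messages identically distributed to the real ones (uniformity and independence of the unopened ROT string in the receiver case, of $d$ in the sender case), and (ii) observing that each simulator is straight-line and never measures or otherwise disturbs the adversary's quantum register or auxiliary state, so a mixed or entangled auxiliary input causes no difficulty and the fit with the two-channel simulator template of Definition~\ref{def:bqm-sc-sec} is immediate with $s = m$. This is precisely the feature that lets a perfectly secure classical reduction carry over to quantum adversaries, as exploited in \cite{Unruh02}.
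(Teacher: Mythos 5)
Your proposal is correct and follows essentially the same route as the paper: the simulators you construct (for the dishonest sender, substituting a uniformly random $d$ and extracting $\hat x_j := m_j \oplus x'_{j\oplus d}$; for the dishonest receiver, extracting $\hat c := c'\oplus d$ and reconstructing $(m_0,m_1)$ with a fresh uniform $m_{1-\hat c}$) are exactly those in the paper's proof. Your write-up is in fact more explicit than the paper's, which leaves the $\mathbb{F}_2$ verifications and the uniformity/independence observations as "easy to verify."
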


\begin{proof}
It is easy to verify that the protocol is correct, if $\setA =  \emptyset$.

Let $\setA = \{\PlayerA\}$ and $\bA_{\PlayerA}$ be a quantum adversary. $\bA_{\PlayerA}$ receives some auxiliary input\footnote{Now, auxiliary inputs and output are always both classical and quantum.}, and outputs $(x'_0,x'_1)$ which are the inputs to $\ROTT_{\PlayerA}$. Then it receives $d$, and finally output $(m_0,m_1)$ and some auxiliary output.
The simulator $\bS_\PlayerA$ works as follows. It receives some auxiliary input, and then executes 
$\bA_{\PlayerA}$, using the auxiliary input. It stores the values $(x'_0,x'_1)$ returned by $\bA_{\PlayerA}$,
and sends it a value $d$ chosen uniformly at random.
$\bA_{\PlayerA}$ then outputs $(m_0,m_1)$ and some
auxiliary output. The simulator outputs the auxiliary output and
sends the values $x_i := m_i \oplus x'_{i \oplus d}$ for $i \in \{0,1\}$ to $\OTT$. It is easy to verify that the real and
the simulated situations give exactly the same output distribution.

Let $\setA = \{\PlayerB\}$ and $\bA_{\PlayerB}$ be a quantum adversary. 
$\bA_{\PlayerB}$ receives some auxiliary input and outputs $(c'_0,y')$, which are the inputs to $\ROTT_{\PlayerA}$, and a value
$d$. Then it receives the values $(m_0,m_1)$, and outputs some auxiliary output.
The simulator works as follows. It receives some auxiliary input, and then executes 
$\bA_{\PlayerB}$ on the auxiliary input, which outputs $(c'_0,y')$ and a value $d$.
The simulator now sends $c := c' \oplus d$ to $\OTT$, and receives a value $y$ back.
Then, it sets $m_{c' \oplus d} := y \oplus y'$, chooses the other value $m_{c' \oplus d \oplus 1}$
uniformly at random, and sends $(m_0,m_1)$ to $\bA_{\PlayerB}$. Finally, it outputs the auxiliary output
that $\bA_{\PlayerB}$ returns. It is easy to verify
that the real and the simulated situations give exactly the same output distribution.
\end{proof}

\section{Bit-Commitment from ROT}

In \cite{serge:bounded}, a bit-commitment protocol is presented and proved secure for a weak binding condition. 
In \cite{serge:new}, it is shown that the same protocol is in fact also secure under a stronger binding condition. However,
as for $\ROTT$, their proof does not take auxiliary inputs into account.
In a similar way as $\ROTT$, their protocol could be proven secure in our framework.
But because protocols can be composed in our framework, we can now give a much simpler proof: 
We can implement bit-commitment based directly on $\ROTT$.
The composition theorem then implies that if $\ROTT$ is 
replaced by an instance of $\bqsOT$, the bit-commitment protocol remains secure.
The $\BC$ functionality is defined as follows.

\begin{definition} \label{def:bc}
The functionality $\BC$ has two phases, which are defined as follows:
\begin{itemize}
\item Commit: $\BC$ receives $b \in \{0,1\}$ from $\PlayerA$ and sends $\perp$ to $\PlayerB$.
\item Open: $\BC$ receives $a \in \{0,1\}$ from $\PlayerA$. If $a=1$, it sends $b$ to $\PlayerB$. Otherwise, it sends
$\perp$.
\end{itemize}
\end{definition}

Let $\TOR{1}{2}{\ell}$ be a reversed version of of $\ROT{1}{2}{\ell}$, i.e., $\PlayerB$ is the sender and $\PlayerA$ is the receiver.
The protocol $\OTtoBC = (\OTtoBC_{\PlayerA}, \OTtoBC_{\PlayerB})$ uses $\TOR{1}{2}{\ell}$ and a noiseless unidirectional classical $\comm$ from $\PlayerA$ to $\PlayerB$ to implement $\BC$. We now first describe the actions of the committer.

\begin{protocol}{$\OTtoBC_{\PlayerA}$}{}
\item[{\bf Commit:}] \ 
\begin{itemize}
\item[1.] Receive input $b$ from $\PlayerA$ and $(c,y) \in \{0,1\} \times \{0,1\}^n$ from $\TORR$.
\item[2.] Send $m := b \oplus c$ to $\comm$.
\end{itemize}
\item[{\bf Open:}] \ 
\begin{itemize}
\item[1.] Receive input $a$ from $\PlayerA$. If $a=1$, then send $(b,y)$ to $\comm$, and $(\perp,\perp)$ otherwise.
\end{itemize}
\end{protocol}

The actions of the verifier are specified by:

\begin{protocol}{$\OTtoBC_{\PlayerB}$}{}
\item[{\bf Commit:}] \ 
\begin{itemize}
\item[1.] Receive $(x_0,x_1)$ from $\TORR$ .
\item[2.] Receive $m$ from $\comm$ and output $\perp$.
\end{itemize}
\item[{\bf Open:}] \ 
\begin{itemize}
\item[1.] Receive $(b,y)$ from $\comm$.
\item[2.] If $(b,y) \neq (\perp,\perp)$ and $x_{b \oplus m} = y$, then output $b$, and $\perp$ otherwise.
\end{itemize}
\end{protocol}

\begin{theorem} \label{thm:OTtoBC}
For every $m > 0$, $\OTtoBC(\TOR{1}{2}{\ell} \| \comm)$ implements $\BC$ with an error of at most $2^{-\ell}$,
secure against $m$-bounded adversaries using $m$-bounded simulators.
\end{theorem}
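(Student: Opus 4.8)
The plan is to verify the three requirements of Definition~\ref{def:bqm-sc-sec} one at a time, with $\bF=\TORR\|\comm$ and $\bG=\BC$. Correctness and the hiding property (``security for $\PlayerA$'') will hold with zero error, and the entire $2^{-\ell}$ slack in the statement comes from the binding property (``security for $\PlayerB$''). For \textbf{correctness}, with both players honest $\PlayerA$ holds $(c,y)$ with $y=x_c$ and announces $m=b\oplus c$, so $b\oplus m=c$ and the verifier's opening test $x_{b\oplus m}=y$ becomes $x_c=y$, which holds; thus $\PlayerB$ outputs $b$ (or $\perp$ when $a=0$), exactly as $\BC_\emptyset$.

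For \textbf{hiding}, $\PlayerB$ is corrupted and plays the sender of $\TORR$. I would let $\bS_\PlayerB$ internally emulate $\TORR_{\{\PlayerB\}}$ and $\comm$, run $\bA_\PlayerB$, intercept the pair $(x_0,x_1)$ that $\bA_\PlayerB$ inputs to $\TORR$, and then feed $\bA_\PlayerB$ a \emph{uniformly random} bit as the message $m$ on $\comm$. The key point is that in the real protocol $m=b\oplus c$ with $c$ drawn by the ideal $\TORR$ and never disclosed to its sender, so $m$ is uniform and independent of $b$ and of everything in $\bA_\PlayerB$'s view (including any register entangled with the environment); hence the Commit-phase transcript is identically distributed in the real and simulated worlds. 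In the Open phase, if the ideal $\BC$ outputs $b$, the simulator hands $\bA_\PlayerB$ the pair $(b,x_{b\oplus m})$, which is exactly what an honest committer would reveal since her $y$ equals $x_c=x_{b\oplus m}$; if $\BC$ outputs $\perp$, the simulator sends $(\perp,\perp)$. This is a perfect simulation, and $\bS_\PlayerB$ is only $\bA_\PlayerB$ plus classical bookkeeping, hence $m$-bounded.

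For \textbf{binding}, $\PlayerA$ is corrupted and plays the receiver of $\TORR$. I would let $\bS_\PlayerA$ emulate $\TORR_{\{\PlayerA\}}$ and $\comm$, run $\bA_\PlayerA$, record the value $(c,y)$ it sends to $\TORR$ (the simulator sets $x_c:=y$ and, like the functionality, picks $x_{1-c}$ uniformly at random) and the bit $m$ it sends on $\comm$, and then commit the bit $\tilde b:=c\oplus m$ to the ideal $\BC$. When $\bA_\PlayerA$ later sends an opening $(b',y')$, the simulator sends $a=1$ precisely when $b'=\tilde b$ and $y'=y$, and $a=0$ otherwise. The claim is that the real and ideal executions then agree except on a single bad event: if $b'=\tilde b$ then $b'\oplus m=c$, so the honest verifier accepts iff $y'=x_c=y$, matching the simulator; and if $b'=1-\tilde b$ then $b'\oplus m=1-c$, so the verifier accepts -- and outputs the ``wrong'' bit $1-\tilde b$, which the ideal $\BC$ can never produce -- iff $y'=x_{1-c}$. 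Since $x_{1-c}$ is uniform over $\{0,1\}^\ell$ and, crucially, never appears in $\bA_\PlayerA$'s view (the receiver of $\ROTT/\TORR$ obtains nothing back from the functionality, and the committer only sends on $\comm$), this happens with probability at most $2^{-\ell}$. Outside this event Bob's output agrees in both worlds, while $\bA_\PlayerA$'s own output is a function only of its auxiliary input and internal randomness and is therefore distributed identically; so the joint state of the adversary's output, the verifier's output, and the environment is $2^{-\ell}$-close. Again $\bS_\PlayerA$ is $m$-bounded, needing only the classical memory between phases that the model permits for multi-phase functionalities.

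The only substantive step is the binding argument: one must see that $\tilde b=c\oplus m$ is genuinely pinned down already in the Commit phase, that it is extractable by the simulator from the classical transcript alone so that the adversary's quantum output is left untouched, and that the independence of $x_{1-c}$ from $\bA_\PlayerA$'s view survives in the presence of arbitrary (possibly entangled) quantum auxiliary input -- this last point is what makes the clean $2^{-\ell}$ bound go through. Everything else is routine bookkeeping with the trace distance and with the definitions of $\ROTT_{\{\PlayerA\}}$ and $\ROTT_{\{\PlayerB\}}$.
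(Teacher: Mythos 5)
Your proof is correct and follows essentially the same route as the paper's: the same simulators (a uniformly random $m$ for the hiding direction, extraction of $\tilde b = c\oplus m$ from the classical Commit transcript for the binding direction) and the same single bad event, namely the adversary guessing $x_{1-c}$, which gives the $2^{-\ell}$ error. If anything, your binding simulator is slightly more precise than the paper's, since you also require $y'=y$ before sending $a=1$, which is needed for the real and ideal verifier outputs to coincide when the adversary opens to $\tilde b$ with an incorrect string.
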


\begin{proof}
It is easy to verify that the protocol is correct, if $\setA =  \emptyset$

Let $\setA = \{\PlayerA\}$ and $\bA_{\PlayerA}$ be a quantum adversary. In the commit phase, it receives some auxiliary input, sends $(c,y)$ to $\TORR$ and $m$ to $\comm$ and outputs some auxiliary output.
In the open phase, it receives some
auxiliary input, sends $(b,y')$ to $\comm$, and outputs some auxiliary output.
Note that in the real execution, if $(b,y') = (c \oplus m,y)$, the protocol outputs $b$ to $\PlayerB$ in the open phase. On the other hand, if $b \neq c \oplus m$, the protocol will only output $b$ to $\PlayerB$,  if $y = x_{1 - c}$.
Since $x_{1 - c}$ is chosen uniformly at random, this will only happen with a probability of $2^{-\ell}$.
The simulator $\bS_{\PlayerA}$ does the following: In the commit phase, it receives the auxiliary input, and sends it to
$\bA_{\PlayerA}$ to run the commit phase, from which it receives $(c,y)$, $m$, and some auxiliary output. It outputs the auxiliary output, sends $c \oplus m$ to $\BC$, and saves $c \oplus m$ in his classical memory.
In the open phase, it receives some auxiliary input and sends it to 
$\bA_{\PlayerA}$ to run the open phase. It receives $(b,y')$ from $\bA_{\PlayerA}$, and sends $a=1$ to $\BC$ if $b = c \oplus m$, and $a = 0$ otherwise. Finally, it outputs the auxiliary output of $\bA_{\PlayerA}$.
It is easy to verify that simulation is equal to the real execution, except with probability at most $2^{-\ell}$.

Let $\setA = \{\PlayerB\}$ and $\bA_{\PlayerB}$ be a quantum adversary.  In the commit phase, it receives some
auxiliary input, sends $(x_0,x_1)$ to $\ROTT$, receives $m$, and outputs some auxiliary output.
In the open phase, it receives some
auxiliary input and $(b,y)$ from $\comm$, and outputs some auxiliary output.
The simulator
$\bS_{\PlayerB}$ does the following: In the commit phase, it receives some auxiliary input, 
and sends it to $\bA_{\PlayerB}$, from which it receives $(x_0,x_1)$. Then, it sends a value $m$ chosen uniformly at random to $\bA_{\PlayerB}$, and gets some auxiliary output back. It outputs the
auxiliary output, and stores $(x_0,x_1)$ and $m$ in the classical memory. In the open phase, it receives some auxiliary input,
and a value $b'$ from $\BC$. If $b' = \perp$, it sets $y := \perp$, and $y := x_{b' \oplus m}$ otherwise. It sends the auxiliary input and  $(b',y)$ to $\bA_{\PlayerB}$, and outputs the auxiliary output returned by $\bA_{\PlayerB}$.
It is easy to see that the simulation produces always exactly the same output as the simulation.
\end{proof}

Let $\bqsTO$ be the same protocol as $\bqsOT$, but in the opposite direction.
Using Theorem~\ref{thm:bqsOT} and \ref{thm:OTtoBC}, as well as Theorem~\ref{thm:bqs-compose}, and by choosing $\ell := \log 1/\eps$, we get

\begin{theorem}
The Protocol $\OTtoBC( \bqsTO(\qComm \| \comm) \| \comm)$ implements $\BC$ with an error of at most
$6\eps$, secure against $m$-bounded adversaries using $m$-bounded simulators, if
$$
10 m \leq n - 20 \sqrt[3]{n^2 \log{\frac{1}{\eps}}} - 20 \log{\frac{1}{\eps}} - 4
$$
\end{theorem}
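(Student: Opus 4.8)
The plan is to obtain this theorem purely as a corollary of three results already established: the security of $\bqsOT$ (Theorem~\ref{thm:bqsOT}), the reduction of bit commitment to randomized OT (Theorem~\ref{thm:OTtoBC}), and the sequential composition theorem (Theorem~\ref{thm:bqs-compose}, in the multi-phase form sketched after it). The only real work is choosing the length parameter $\ell$ correctly and bookkeeping the error terms and memory bounds through the composition; no new quantum-information content is needed.

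First I would instantiate Theorem~\ref{thm:bqsOT} for the reversed protocol $\bqsTO$. Since $\bqsTO$ is $\bqsOT$ with the roles of $\PlayerA$ and $\PlayerB$ swapped and the security statements Lemma~\ref{lem:secA} and Lemma~\ref{lem:secB-mixedAux} are symmetric in the players (with $\qComm$ and $\comm$ now running from $\PlayerB$ to $\PlayerA$), $\bqsTO(\qComm \| \comm)$ implements $\TOR{1}{2}{\ell}$ with error at most $5\eps$, secure against $m$-bounded adversaries using $m$-bounded simulators, whenever $8\ell + 10m \leq n - 20\sqrt[3]{n^2\log(1/\eps)} - 12\log(1/\eps) - 4$. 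Next, I would set $\ell := \log(1/\eps)$, so that Theorem~\ref{thm:OTtoBC} gives that $\OTtoBC(\TOR{1}{2}{\ell} \| \comm)$ implements $\BC$ with error at most $2^{-\ell} = \eps$, secure against $m$-bounded adversaries using $m$-bounded simulators; here $\TORR$ is invoked only once, at the start of the commit phase, so it is called sequentially as required, and the fact that $\BC$ has two phases is absorbed by the multi-phase generalization of the composition theorem.

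Then I would apply Theorem~\ref{thm:bqs-compose} with $\bF := \BC$, $\bG := \TOR{1}{2}{\ell}$, $\bG' := \comm$, $\bH := \qComm \| \comm$, $\bP := \OTtoBC$, $\bQ := \bqsTO$, $\eps_1 := \eps$, $\eps_2 := 5\eps$, and $m_1 = m_2 = s_1 = s_2 = m$. The hypothesis $m_2 \geq s_1$ holds with equality, so the theorem yields that $\OTtoBC(\bqsTO(\qComm \| \comm) \| \comm)$ implements $\BC$ with error at most $\eps_1 + \eps_2 = 6\eps$, secure against $\min(m_1,m_2) = m$-bounded adversaries using $s_2 = m$-bounded simulators. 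Finally, substituting $\ell = \log(1/\eps)$ into the condition from the first step turns $8\ell + 10m \leq n - 20\sqrt[3]{n^2\log(1/\eps)} - 12\log(1/\eps) - 4$ into $10m \leq n - 20\sqrt[3]{n^2\log(1/\eps)} - 20\log(1/\eps) - 4$, which is exactly the claimed bound, completing the argument.

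I expect no genuine obstacle. The only two points that deserve a sentence of justification are (i) that the security bounds proved for $\bqsOT$ carry over verbatim to the reversed protocol $\bqsTO$ by the player-symmetry of the statements, and (ii) that one must invoke the multi-phase extension of Theorem~\ref{thm:bqs-compose} rather than its literally non-interactive version, since the target functionality $\BC$ has a commit and an open phase and the simulator must carry classical state between them. Everything else is routine arithmetic on the error and memory parameters.
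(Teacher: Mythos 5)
Your proposal is correct and is precisely the argument the paper intends: the paper itself gives only the one-line derivation ``Using Theorem~\ref{thm:bqsOT} and \ref{thm:OTtoBC}, as well as Theorem~\ref{thm:bqs-compose}, and by choosing $\ell := \log 1/\eps$,'' and your write-up fills in exactly those steps, with the correct error bookkeeping ($\eps + 5\eps = 6\eps$), the correct substitution $8\ell = 8\log(1/\eps)$ turning the $-12\log(1/\eps)$ term into $-20\log(1/\eps)$, and the right observations about player symmetry for $\bqsTO$ and the multi-phase form of the composition theorem.
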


\end{document}